\def\BibTeX{{\rm B\kern-.05em{\sc i\kern-.025em b}\kern-.08em
		T\kern-.1667em\lower.7ex\hbox{E}\kern-.125emX}}
\newtheorem{theorem}{Theorem}
\newtheorem{lem}{Lemma}
\begin{document}
\title{Single or Multiple Frames Content Delivery for Next-Generation Networks?}
\author{Mohammad~R.~Abedi, Nader~Mokari, Mohammad~R.~Javan, and Eduard~.~A.~Jorswieck
	\thanks{
		
		Mohammad~R.~Abedi and Nader~Mokari are with ECE Department, Tarbiat
		Modares University, Tehran, Iran.
		
		Mohammad~R.~Javan is with the Department of Electrical and
		Robotics Engineering, Shahrood University, Shahrood,
		Iran.
		
		E~.~A.~Jorswieck is with the Department of Systems and Computer
		Engineering, Dresden University of Technology (TUD), Germany.}}
\maketitle

\begin{abstract}
	This paper addresses the four enabling technologies, namely multi-user sparse code multiple access (SCMA), content caching, energy harvesting, and physical layer security for
	 proposing an energy and spectral efficient resource allocation algorithm for the access and backhaul links in heterogeneous cellular networks.  Although each of the above mentioned issues could be a topic of research, in a real situation, we would face a complicated scenario where they should be considered jointly, and hence, our target is to consider these technologies jointly in a unified
	framework. Moreover, we propose two novel content delivery scenarios: 1) single frame content delivery (SFCD), and 2) multiple frames content delivery (MFCD), where the time duration of serving user requests is divided into several frames. In the first scenario, the requested content by each user is served over one frame. However, in the second scenario, the requested content by each user can be delivered over several frames. We formulate the resource allocation for the proposed scenarios as optimization problems where our main aim is to maximize the energy efficiency of access links subject to the transmit power and rate constraints of access and backhaul links, caching and energy harvesting constraints, and SCMA codebook allocation limitations. Due to the practical limitations, we assume that the channel state information values between eavesdroppers and base stations are uncertain and design the network for the worst case scenario.  Since the corresponding optimization problems are mixed integer non-linear and nonconvex programming, NP-hard, and intractable, we propose an iterative algorithm based on the well-known alternate and successive convex approximation methods. In addition, the proposed algorithms are  studied from the computational complexity, convergence, and performance perspectives. Moreover, the proposed caching scheme outperforms the existing traditional caching schemes like random caching and most popular caching. We also study the effect of joint and disjoint considerations of enabling technologies for the performance of next-generation networks. We also show that the proposed caching strategy, MFCD and joint solutions have 43\%, 9.4\% and \%51.3 performance gain compared to no cahcing, SFCD and disjoint solutions, respectively.
	\newline
	\emph{\textbf{Index Terms--}} Heterogeneous cellular networks, Content caching, Physical layer	security, Energy harvesting, Imperfect CSI.
\end{abstract}


\section{Introduction}
\subsection{Background and Motivation}
Over recent years, the growth of high data rate of mobile traffic, energy,
content storing, security, and limited knowledge of channels over
mobile networks are the major challenges of network design and
implementation. To tackle these issues and cope with the users' requirements, the next-generation of wireless communications is introduced which uses multiple advanced techniques such as energy harvesting (EH), physical layer (PHY)
security, new multiple access techniques, and content caching. Hence,
all of these issues must be considered together and efficient joint radio resource allocation and content placement algorithms must be applied to provide high performance for the designed networks. However,
devising efficient algorithms to handle
all these issues is a challenging task, and to the best our
knowledge, no research exists addressing all these issues
together in a unified framework. Although each of the mentioned issues could be an interesting research topic, our main contribution
is to study the joint effect of security, EH,
content caching, and imperfect and limited channel knowledge
in a unified joint access and backhaul links framework. In this regards,
we develop a comprehensive model and mathematical representation,
and design a robust resource allocation algorithm. Although the resulting
optimization problem is complicated, effective
optimization methods are used to achieve the solution. The outline of each issue,  applicable solutions, and
related works are explained in the sequel.
\subsubsection{Growth of High Data Rate Mobile Traffic} Incredible growth
 in high data rate mobile applications requires high capacity in radio access and backhaul wireless links. However, the centralized nature of
mobile network architectures can not provide enough capacity on
the wireless access and backhaul links to satisfy high demand for
rich multimedia content. Heterogeneous
network consisting of multiple low power radio access nodes and  the traditional macrocell nodes, is a promising solution to improve coverage and to
provide high capacity \cite{mokari2016limited}. 

\subsubsection{Content Caching}
Multimedia services can
be provided using recent advanced mobile communication
technologies by  new types of mobile devices such as smart phones and tablets. However, transferring the same content several times in a short period imposes capacity pressures on the network. To overcome this, content caching at the
network edge has recently been emerged as a promising technique in
next-generation networks. Caching in next-generation mobile networks also reduces 
the mobile traffic by eliminating
the redundant traffic of duplicate transmissions of the same content
from servers. The
deployment of content cashing relevant to evolved packet core and radio access network (RAN) are studied in
\cite{Woo13}. By
caching, contents can be closer to the end-users, and backhaul traffic can be
offloaded \cite{Bastug14,rezvani2017fairness} to the edge of the network. The authors in
\cite{ahlehagh2014video,shanmugam2013femtocaching} investigate
caching the contents in RAN with the aim to store contents closer to
users. The content caching in small-cell base stations is studied
in \cite{shanmugam2013femtocaching,golrezaei2013femtocaching}. In \cite{gregori2015joint}, the authors reduce both the load and energy consumption of the backhaul links by caching the most popular contents at small base stations (SBSs). In \cite{zhang2017cost}, the authors consider two-tier heterogeneous wireless
networks (HetNets) with hierarchical caching, where the most popular files are cached at SBSs while the less popular ones are cached at macro base stations (MBSs). The goal of \cite{zhang2017cost} is to maximize network capacity with respect to the file transmission rate requirements by optimizing the cache sizes for MBSs and SBSs.

\subsubsection{Energy Harvesting}
 The offer of high-rate services increases the energy consumption at receivers which degrades the battery life.
Therefore, the trade-off between high-rate requirement and long
battery life is required to achieve good performance. Energy
harvesting has emerged as a promising approach to provide
sustainable networks with the long-term sustainable operation of
power supplies. In EH communication networks, nodes
acquire energy from environmental energy sources including random
motion and mechanical vibrations, light, acoustic, airflow, heat,
RF radio waves \cite{Yeatman04,Paradiso05}. The design of novel
transmission policies due to highly random and unpredictable
nature of harvestable profile of the harvested energy is required. 

\subsubsection{New Multiple Access Techniques}
Sparse code multiple access (SCMA) with near optimal spectral
efficiency is a promising technique to improve capacity of
wireless radio access \cite{Hosein13}. This multiple-access
technique that is based on non-orthogonal codebook assignment
provides massive connectivity and improves spectral efficiency
\cite{Hosein13,Au14,moltafet2017comparison}. By performing an appropriate codebook
assignment, a subcarrier in SCMA networks can be shared among
multiple users. Joint
codebook assignment and power allocation for SCMA is studied in
\cite{li2016joint}. The codebook assignment and power allocation
is also investigated in \cite{li2016cost}. The authors formulate
energy-efficient transmission problem to maximize the network energy efficiency (EE)
subject to system constraints.

\subsubsection{Imperfect Channel State Information}
In most previous works, the authors assume perfect channel
state information (CSI) of all links for BSs. However, in
practice, knowing of perfect CSI in BSs requires a huge amount
of bandwidth for signalling through the feedback links which is not
possible. Moreover, due to time varying channel, feedback delay,
quantization error, and estimation errors, perfect CSI may not be
available at transmitters. In this regard, some works aim to
tackle the performance degradation caused by the limited and
imperfect CSI
\cite{ahmed2006outage,mokari2016limited,javan2017resource}. In
\cite{ahmed2006outage}, the authors investigate the power and
subcarrier allocation by the quantized CSI. It is assumed that the
perfect CSI does not exist at transmitters and imperfect CSI can
be achieved via limited rate feedback channels. In
\cite{mokari2016limited}, joint power and subcarrier
allocation is studied for the uplink of an orthogonal
frequency-division multiple access (OFDMA) HetNet assuming imperfect CSI. In \cite{javan2017resource}, a
limited rate feedback scheme is considered to maximize the average
achievable rate for decode-and-forward relay cooperative
networks.

\subsubsection{Security}
The broadcast nature of wireless transmission makes security
against eavesdropping a major challenge for the next
generation wireless networks \cite{csiszar2013secrecy}. In this
regards, physical-layer security is a promising method to provide
security in wireless networks
\cite{wyner1975wire,alavi2017limited}. This technique explores the
characteristics of the wireless channel to provide security for
wireless transmission. In \cite{abedi2016limited}, the authors
consider physical layer security for relay assisted
networks with multiple eavesdroppers. They maximize the sum
secrecy rate of network with respect to transmission power
constraint for each transmitter via imperfect CSI. In
\cite{wang2017new}, the authors investigate the benefits of three
promising technologies, i.e., physical layer security, content
caching, and EH in heterogeneous wireless networks.

\subsubsection{Joint Backhaul and Access
Resource Allocation} Joint resource allocation at backhaul and access links is
investigated in \cite{sharma2017joint} for heterogeneous networks. In \cite{sharma2017joint}, the
full duplex self-backhauling capacity is used to simultaneously
communicate over the backhaul and access links. In \cite{dhifallah2015joint}, joint access
and backhaul links optimization is considered to minimize the
total network power consumption. In
\cite{hua2016wireless}, the authors study joint wireless backhaul
and the access links  resource allocation optimization. The goal is
to maximize the sum rate subject to the backhaul and access
constraints. Joint backhaul and access links optimization is
considered in \cite{shariat2015joint} for dense small cell
networks. Joint
resource allocation in access and backhaul links is considered for
ultra dense networks in \cite{zhuang2017joint} where the goal is to maximize the throughput of the network under system
constraints. In \cite{mirahsanjoint}, the authors consider joint
access and backhaul resource allocation for the
admission control of service requests in wireless virtual network. The access and backhaul links optimizations are considered
for small cells in the mmW frequency in \cite{niu2015exploiting}.

\subsection{Our Contributions}
This paper addresses the above joint provisioning of resources
between the wireless backhauls and access links by using
multi-user SCMA (MU-SCMA) to improve the network energy
efficiency. We consider secure communications in EH
enabled SCMA downlink communications with imperfect channel
knowledge. In our work, we combine and extend several techniques to improve
performance of network and formulate an optimization problem with the aim of maximizing EE with respect to system constraints. There are several
works which consider each of these topics separately. However, in
a real situation, these issues should be considered jointly. To the best
of our knowledge, none of the existing works considered the above issues in a unified framework.  The main contributions of this work are as follows:
\begin{itemize}
   \item We provide a unified framework in which physical layer security, content caching, EH, and imperfect knowledge of channel information is considered jointly in the design of wireless communication networks 
     \item We consider SCMA as a non-orthogonal multiple access technology where the codebooks are allowed to be used several times among users which increases the spectral efficiency.
     
       \item We propose two novel scenarios for content delivery, namely single frame content delivery (SFCD), and multiple frames content delivery (MFCD). We compare the performance of the proposed delivery scenarios with each other for different system parameters. Due to the random energy arrivals in the EH based communication, there may not be enough energy to send the entire file within the desired frames. Therefore, the first scenario may interrupt sending the file. To overcome this difficulty, we can use the second scenario. There, due to the file transfer in multiple frames, the probability of interrupting will be very low. It should be noted that the second scenario can be suitable for applications with large file sizes.
     
     \item We consider the access and backhaul links jointly and formulate the resource allocation for the proposed scenarios as optimization problems whose objectives are to maximize the energy efficiency of the network while transmit power and rate constraints, EH constraints, codebook assignment constraints, as well as caching constraints should be satisfied.
        
     \item We provide mathematical frameworks for our proposed resource allocation problems where fractional programming, alternative optimization, and successive convex approximation methods are successfully applied to achieve  solutions for the resource allocation optimization problems. We further study the convergence and the computational complexity of the proposed resource allocation algorithms.
     \item We evaluate and assess the performance of the proposed scheme for different values of the network parameters using numerical experiments. 
\end{itemize}

The following notations is used in the paper: $[x]^+=\max\{0,x\}$.
$|\mathcal{S}|$ denotes the cardinality of a set $\mathcal{S}$.
$[.]^{\dagger}$ represents the conjugate transpose. $\|.\|$ denotes the
Euclidean norm of a matrix/vector.

The rest of the paper is organized as follows. Section
\ref{System-Model} defines the system model. Section
\ref{THE OPTIMIZATION FRAMEWORK} is dedicated to the optimization
frameworks where the objectives and the constraints are explained. Section \ref{Resource Allocation based on worst case CSI}
describes the details of scheduling, power allocation
algorithm, content placement, EH, codebook assignment, and subcarrier allocation. In Section \ref{simulationsresults}, we provide the numerical analysis, and Section \ref{Conclusion} concludes the paper.

\section{System Model}\label{System-Model}
Consider the downlink SCMA transmission of a wireless heterogeneous
cellular network comprising of $O$ MBSs and $J$ SBSs in a two dimensional
Euclidean plane $\mathbb{R}^2$, as shown in Fig.
\ref{fig-System-Model}. Let us denote by
$\mathcal{O}=\{1,2,\dots,O\}$ the set of the MBSs and by
$\mathcal{J}=\{1,2,\dots,J\}$ the set of the SBSs. Each
cache-capable BS, i.e.,
$b\in\mathcal{B}=\{1,\dots,B\}=\mathcal{O}\bigcup\mathcal{J}$ with
size $B=|\mathcal{B}|$, is connected to the core network via backhaul\footnote{Point-to-multipoint (P2M) technologies are considered as backhaul
networks for small cell which is an effective way of sharing the
backhaul resource between several BSs. PMP backhaul has high
spectral efficiency, and speed and flexibility of deployment, and
have been successfully deployed in the Middle East, Africa and in
Europe by major operators \cite{BACKHAUL-10}.} 
links which are wireless links. The paper assumes that there is no interference
between the wireless backhaul and access links and these links are
out-of-band. A set of total number of users,
$\mathcal{U}_b=\{1,2,\dots,U_b\}$ is served by BS $b$
with size $U_b=|\mathcal{U}_b|$. The set of network users is
$\mathcal{U}=\bigcup_{b=1}^{B}\mathcal{U}_b$. The system consists
of $Q$ eavesdroppers which are indexed by
$q\in\mathcal{Q}=\{1,2,\dots,Q\}$ with size $Q=|\mathcal{Q}|$. The
total transmit bandwidth of the access, i.e., BW, is divided into $N$ subcarriers
where the bandwidth of each subcarrier
 is $BW_n$ ($BW=N\times
BW_n$). $K$ social media
$\omega_{k},k\in\mathcal{K}=\{1,\dots,K\}$, as the main traffic of
internet contents, are requested by the users in the network. We
assume that during the runtime of the network optimization
process,  user-BS association is fixed.

\begin{figure*}[h]
  \begin{center}
    \includegraphics[width=6 in]{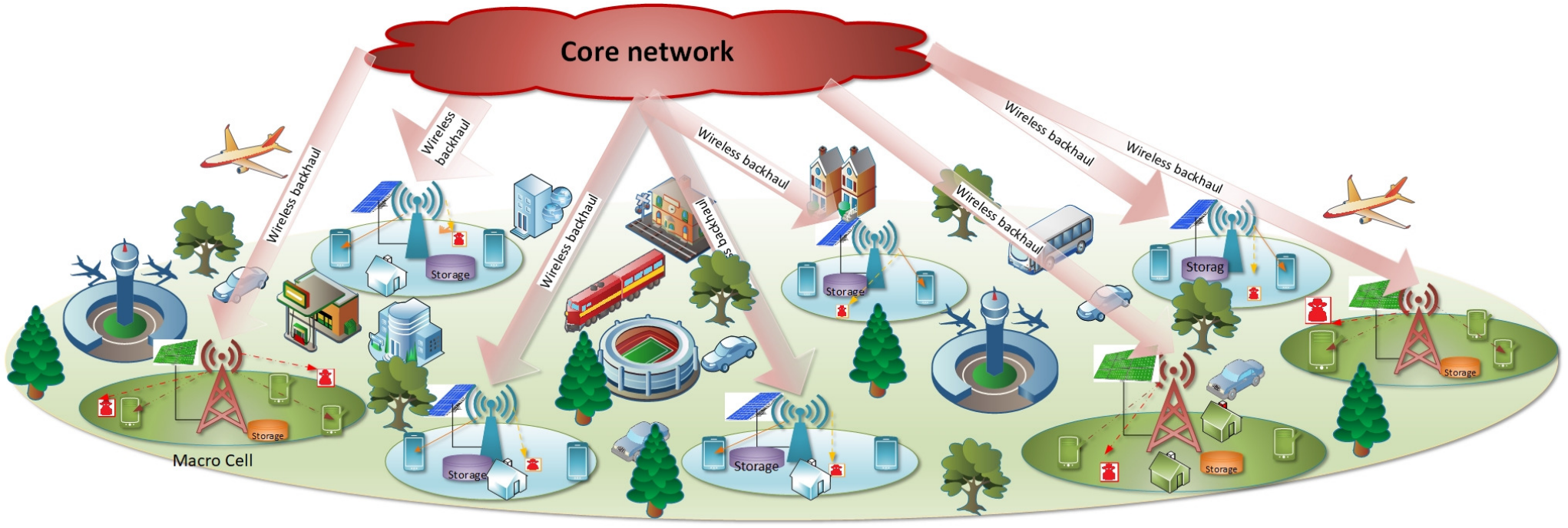} %
    \caption{The proposed wireless network with macro-BSs, small BSs, users, and eavesdroppers.}
    \label{fig-System-Model}
  \end{center}
\end{figure*}
\vspace{-0.5cm}
The message passing algorithm (MPA) can be used to detect
multiplexed signals on the same subcarriers \cite{Hoshyar08}. In our resource allocation framework, we consider two tasks: content caching and delivery resource allocations. The content caching task deals with determining which content should be cached in which storage. However, the delivery task deals with performing resource allocation such that the contents are delivered to the requesting users within serving time.  We assume that the time is split into several super frames. We further assume that each super frame is divided into  $F$ frames of duration $T$ seconds. throughout each supper frame, the arriving users requests, which should be served over the next supper frame, are gathered by the network control system. We emphasize that our proposed content caching and resource allocation algorithms are run for each super frame. Throughout the network run time, the network monitors the file requests and estimate the content distribution (content popularity). At the beginning of each super frame, if a change in the statistics of the contents popularity is detected, joint content caching and radio resource allocation is performed, and otherwise, only radio resource allocation is  performed. Note that the proposed resource allocation problem is solved at the beginning of each supper frame, and hence, the information about the CSIs and energy harvesting profile over all $F$ frames of the considered supper frame are required and should be known in advanced. With such assumption, we rely on the off-line approach which is common in the context of energy harvesting\footnote{In the context of energy harvesting, there is another approach which is called on-line approach. This approach assumes that the information is available only causally and use the Markov decision process method for resource allocation over $F$ frames \cite{minasian2014energy}. Although the availability of noncausal information is no practical, the off-line approach would provide a benchmark for energy harvesting networks. We leave the on-line approach as a future research direction} \cite{luo2013optimal,minasian2014energy}. The proposed transmission structure is shown in Fig. \ref{Frame-structure}.

\begin{figure*}[h]
	\vspace{-0.5cm}
  \begin{center}
    \includegraphics[width=6.4 in]{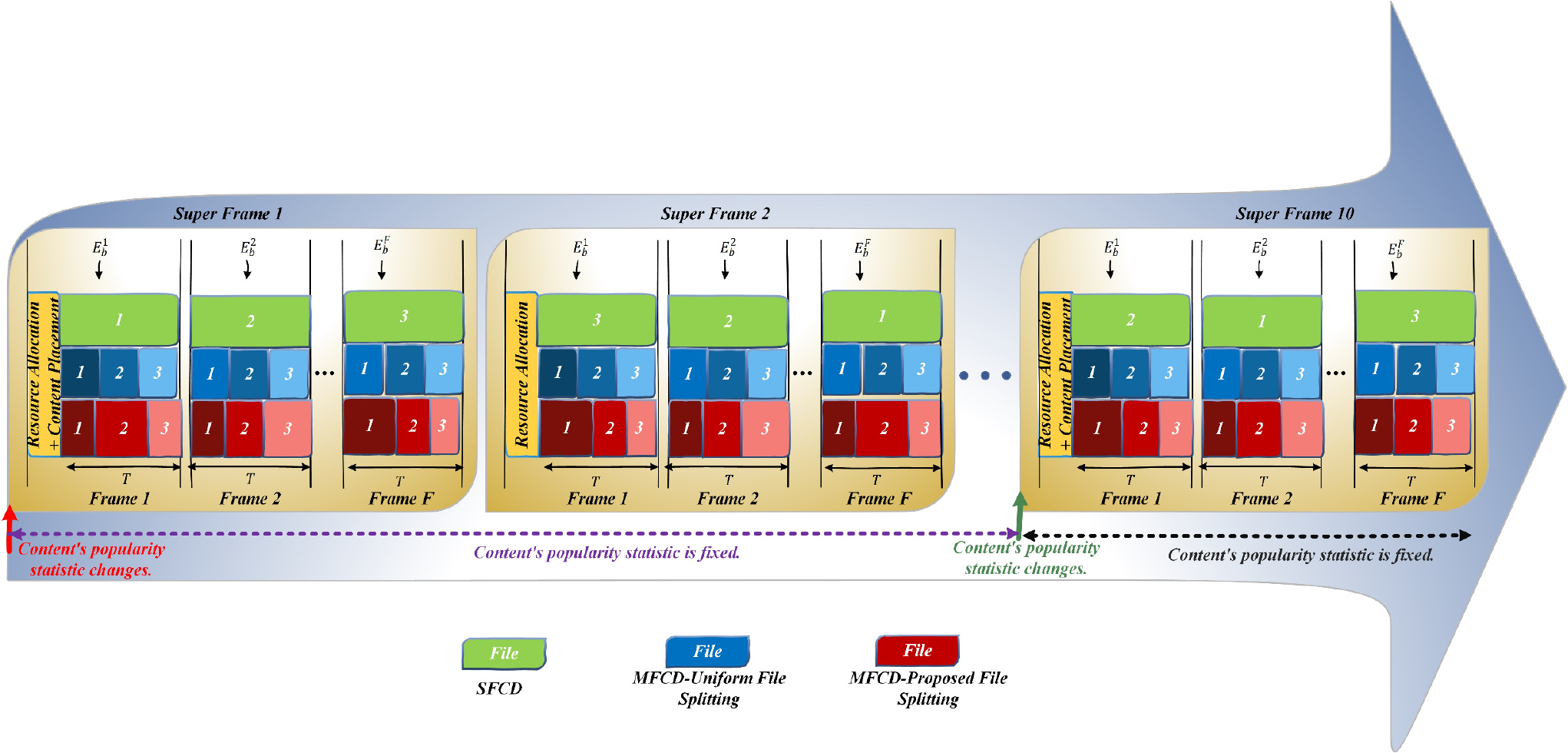} 
    \caption{SFCD and MFCD transmission structure.}
    \label{Frame-structure}
  \end{center}
\end{figure*}

Let $\mathbf{s}=\{s^{mt}_{bu}\}$ denote the codebook assignment at
BS $b$ at frame $t$ where $s^{mt}_{bu}$ is an indicator
variable that is 1 if codebook $m$ is assigned to user $u$ at BS
$b$ at frame $t$ and 0 otherwise. Furthermore, let
$\mathbf{p}=\{p^{mt}_{bu}\}$ denote the allocated transmit  power
vector with $p^{mt}_{bu}$ representing the transmit power for user
$u$ at BS $b$ at frame $t$ on codebook $m$. Thus, the total
transmit power of BS $b$ at frame $t$ is $\sum_{u\in
\mathcal{U}_b}\sum_{m\in
\mathcal{M}}s^{mt}_{bu}p^{mt}_{bu},\forall b\in \mathcal{B},t\in
\mathcal{F}$. To transmit the codewords to the designated users,
the transmit power $\mathbf{p}$ is finally allocated on the
corresponding subcarriers. However, different from OFDMA based
networks, the transmit power $p^{mt}_{bu}$ is allocated on
subcarrier $n$ according to a given proportion $\eta_{nm}$, which
is determined by the codebook design ($0<\eta_{nm}<1$ when
$c_{nm}=1$ and $\eta_{nm}=0$ when $c_{nm}=0$ \cite{Hosein13}).
Therefore, the signal-to-interference-plus-noise ratio (SINR) of
user $u$ in BS $b$ when using codebook $m$ can be expressed as follows:
\begin{equation}
\gamma^{mt}_{bu}=\frac{\sum_{n\in
\mathcal{N}}\eta_{nm}s^{mt}_{bu}p^{mt}_{bu}g^{nt}_{bu}}{I^{mt}_{bu}+(\sigma^{n}_u)^2},
\end{equation}
where $I^{mt}_{bu}=\sum_{\acute{b}\in
\mathcal{B}\setminus\{b\}}\sum_{\acute{u}\in
\mathcal{U}_{\acute{b}}}\sum_{n\in
\mathcal{N}}\eta_{nm}s^{mt}_{\acute{b}\acute{u}}
p^{mt}_{\acute{b}\acute{u}}g^{nt}_{\acute{b}u}$ and $g^{nt}_{bu}$ denotes the channel power gain between BS $b$ and
user $u$ on subcarrier $n$ at time $t$. $(\sigma^{n}_u)^2$ is the noise power on subcarrier $n$ at user $u$. Each of the subcarriers can be assumed to undergo a block-fading, and hence, the channel coefficients are kept constant within each frame. The achievable rate for the $u^{\text{th}}$ user in BS
$b$ at frame $t$ on codebook $m$ is given by $R^{\text{D},mt}_{bu}= \log_2\left(1+\gamma^{mt}_{bu}\right).$

We assume that the eavesdroppers only wiretap the access link\footnote{Due to the high computing power at the BSs, stronger cryptography is used for the links between core and BSs, hence eavesdropping of these links is hard and difficult. Therefore, we assume that only access link can be wiretapped by eavesdroppers.}. Therefore, the SINR of eavesdropper $q$ in BS $b$ when using
codebook $m$ can be expressed as:
\begin{equation}
\hat{\gamma}^{mt}_{buq}=\frac{\sum_{n\in
\mathcal{N}}\eta_{nm}s^{mt}_{bu}p^{mt}_{bu}
h^{nt}_{bq}}{\hat{I}^{mt}_{buq}+(\sigma^{n}_q)^2},
\end{equation}
where $\hat{I}^{mt}_{buq}=\sum_{\acute{b}\in
\mathcal{B}\setminus\{b\}}\sum_{\acute{u}\in
\mathcal{U}_{\acute{b}}}\sum_{n\in
\mathcal{N}}\eta_{nm}s^{mt}_{\acute{b}\acute{u}}p^{mt}_{\acute{b}\acute{u}}h^{nt}_{\acute{b}q}$
and $h^{mt}_{bq}$ denotes the channel power gain between BS $b$ and eavesdropper $q$ on subcarrier $n$. $(\sigma^{n}_q)^2$ is the noise power on subcarrier $n$ at eavesdropper $q$. The
achievable rate for the $q^{\text{th}}$ eavesdropper in BS $b$ at frame $t$ is evaluated by $R^{\text{E},mt}_{buq}=\log_2\left(1+\hat{\gamma}^{mt}_{buq}\right).$
The achievable secrecy access rate for non-colluding eavesdroppers and the $u^{\text{th}}$ user in BS $b$
at frame $t$ on codebook $m$ is expressed as \cite{pinto2012secure},
\begin{equation}\label{eq-R-S}
R^{\text{S},mt}_{bu}=\left[R^{\text{D},mt}_{bu}-\max_{q\in
\mathcal{Q}}R^{\text{E},mt}_{buq}\right]^+.
\end{equation}

\section{THE OPTIMIZATION FRAMEWORK}\label{THE OPTIMIZATION FRAMEWORK}
In this section, we provide the design objective and a
characterization of the constraints that must be satisfied by
content caching, EH, codebook assignment, and power
allocations. 

\subsection{System Constraints}
\subsubsection{Content Caching Constraints} Let the finite size of cache memory at
the $b^{\text{th}}$ BS is denoted by $V_b$. If the requested file
$k$ by user $u$ exists in the cache, then the file is sent to the user
immediately. This event is referred as a cache hit. However, if
file $k$ does not exist in the cache, then the request is
forwarded to the core network via backhaul, then downloaded file
$k$ from the core network via backhaul is forwarded to the user.
The size of the social media, $\alpha_k,k\in\mathcal{K}$ is
assumed to be Log-Normal distributed with parameters $\mu$ and $\kappa$ \cite{sobkowicz2013lognormal}. As the total cached media should not
exceed the finite size of cache memory at BS $b$, we have
\begin{equation}\label{eq-V}
\sum_{k\in\mathcal{K}}\theta_{bk}{\alpha}_k \leq V_b,\forall
b\in\mathcal{B},
\end{equation}
where $\theta_{bk}$ is a binary indicator declaring whether social media
$\omega_k$ is cached at BS $b$.

\subsubsection{
Content Delivery} The content delivery consists of two phases: 1) a cache placement phase, and 2)  a content delivery phase. In the cache placement phase, the cache content is determined at each BS, and in the content delivery phase, the requested files are delivered to users over wireless channels. In this paper, two new delivery scenarios are considered for content delivery phase. In the first scenario, the user's requested file $k$ with size $\alpha_k$ is sent in a single frame, while in the second scenario the user's requested file $k$ is divided into several parts with sizes $\{\beta^t_k\}, \forall t, k$, which  are sent over several frames. The scenarios are shown in Fig. \ref{Frame-structure}. To ensure that all parts of each file are transmitted to user, the following constraint should be satisfied
\begin{equation}\label{eq--1}
\sum_{t\in F}\beta^t_k=\alpha_k, \forall k.
\end{equation}

\subsubsection{Access and Backhaul Links Constraints}
Let $\upsilon_{ku}$ denote whether user $u$ needs $\omega_k$. The
backhaul traffic constraint for BS $b$ for the SFCD scenario is written as follows
\begin{align}\label{eq-Upsilon}
&\sum_{k\in\mathcal{K}}\sum_{u\in\mathcal{U}_b}\sum_{m\in\mathcal{M}}s^{mt}_{bu}
(1-\theta_{bk}).\min\left\{\sum_{u\in\mathcal{U}_b}\upsilon_{ku},1\right\}\alpha_k\\\nonumber&\leq
T\sum_{n\in\mathcal{N}}\zeta_{bn}\tilde{R}^{nt}_b,\forall
t\in\mathcal{F}, b\in\mathcal{B},
\end{align}
where the left hand side term of (\ref{eq-Upsilon}) is the backhual
traffic for BS $b$ and the right hand side term of (\ref{eq-Upsilon})
is backhaul traffic capacity, which must be greater than the backhaul traffic
for each BS. The backhaul link is a simple P2M link with OFDMA technology. $\zeta_{bn}\in\{0,1\}$ denotes whether BS $b$ uses subcarrier
$n$. For the MFCD scenario, $\alpha_k$ in (\ref{eq-Upsilon}) is replaced by $\beta^t_k$ as follows
\begin{align}\label{eq-Upsilon-1}
&\sum_{k\in\mathcal{K}}\sum_{u\in\mathcal{U}_b}\sum_{m\in\mathcal{M}}s^{mt}_{bu}
(1-\theta_{bk}).\min\left\{\sum_{u\in\mathcal{U}_b}
\upsilon_{ku},1\right\}\beta^t_k\leq\\\nonumber&
T\sum_{n\in\mathcal{N}}\zeta_{bn}
\tilde{R}^{nt}_b,\forall
t\in\mathcal{F}, b\in\mathcal{B}.
\end{align}

Note that for all requests of social media $\omega_k$ from BS $b$, if social media $\omega_k$ is not stored at BS $b$, the
requested social media $\omega_k$ is disseminated to BS $b$ from
the core network just once. $\tilde{R}^{nt}_b$ is the rate of
backhaul link for BS $b$ on subcarrier $n$ which is calculated by
\begin{equation}
\tilde{R}^{nt}_b=\log_2\left(1+\tilde{\gamma}^{nt}_b\right),\forall
t\in\mathcal{F},b\in\mathcal{B},n\in\mathcal{N}.
\end{equation}

When BS receives the data from core network, it transmits the file
back on the downlink. Let $\tilde{\gamma}^{nt}_b$ denotes the
received SNR at BS $b$ from the core network when the backhaul is
used to fetch the files from the core network for BS $b$.
$\tilde{\gamma}^{nt}_b$ can be written as $\tilde{\gamma}^{nt}_b=\frac{\tilde{p}^{nt}_b\tilde{h}^{nt}_b}{(\sigma^n_b)^2},$
where ${\tilde{p}^{nt}}_b$ is the transmit power of each wireless
backhaul link connected to BS $b$ on subcarrier $n$ and
$\tilde{h}^{nt}_b$ denotes the channel power gain between the $b^{\text{th}}$ BS on subcarrier $n$ and the core network and $(\sigma^n_b)^2$ is the
noise power at the $b^{\text{th}}$ BS on subcarrier $n$. Also the downlink traffic should not exceed the traffic capacity of each downlink. This yields for the first delivery scenario
\begin{equation}\label{eq-R-S-constraint}
\sum_{k\in\mathcal{K}}\sum_{m\in\mathcal{M}}s^{mt}_{bu}
\upsilon_{ku}\alpha_k\leq
T\sum_{m\in\mathcal{M}}R^{\text{S},mt}_{bu}, \forall
t\in\mathcal{F}, b\in\mathcal{B}, u\in\mathcal{U}_b.
\end{equation}

Note that for the MFCD scenario, $\alpha_k$ in (\ref{eq-R-S-constraint}) is replaced by $\beta^t_k$ as follows:
\begin{equation}\label{eq-R-S-constraint-1}
\sum_{k\in\mathcal{K}}\sum_{m\in\mathcal{M}}s^{mt}_{bu}
\upsilon_{ku}\beta^t_k\leq T
\sum_{m\in\mathcal{M}}R^{\text{S},mt}_{bu}, \forall
t\in\mathcal{F}, b\in\mathcal{B}, u\in\mathcal{U}_b.
\end{equation}

\subsubsection{Power Allocation Constraints} To determine the constraints
that must be satisfied by any feasible power allocation, let
$p^{mt}_{bu}$ and $\tilde{p}^{nt}_{b}$ denote the power allocated
to link the $b^{\text{th}}$ BS-the $u^{\text{th}}$ user at time
frame $t$ on codebook $m$ and to link core network-the
$b^{\text{th}}$ BS at frame $t$. The elements of $p^{mt}_{bu}$
and $\tilde{p}^{nt}_{b}$ must satisfy the followings:
\begin{equation}\label{eq-P-constraint-1}
p^{mt}_{bu}\geq0,\forall
b\in\mathcal{B},u\in\mathcal{U}_b,m\in\mathcal{M},t\in\mathcal{F},
\end{equation}
\begin{equation}\label{eq-tilde-P-constraint-1}
\tilde{p}^{nt}_{b}\geq0,\forall
b\in\mathcal{B},n\in\mathcal{N},t\in\mathcal{F}.
\end{equation}

In a practical network, core network has a power budget,
${P}^{\text{Total},t}$, which bounds the total power
allocated by core network on the core network-$b$ BS links and
subcarriers at frame $t$. This constraint can be written as:
\begin{equation}\label{eq-tilde-P-constraint-2}
\sum_{b\in\mathcal{B}}\sum_{n\in\mathcal{N}}\zeta_{bn}\tilde{p}^{nt}_{b}\leq
{P}^{\text{Total},t},\forall t\in\mathcal{F}.
\end{equation}

\subsubsection{EH Constraints}
We assume that the $b^{\text{th}}$ BS is connected to a
rechargeable battery with capacity $E^{\text{max}}_b$, and obtains
its power supply through an EH renewable sources such as solar. The renewable sources are used to charge batteries during the day. $E^t_b\in[0,E^{\text{max}}_b]$ is defined as the energy remaining
in the battery at the start of the $t^{\text{th}}$ frame. Then $E^t_b$ can be written in recursive form as:
\begin{align}\label{eq-energy-23456}\nonumber
E^{t+1}_b=&\min\left(E^t_b-T\sum_{m\in\mathcal{M}}
\sum_{u\in\mathcal{U}_b}s^{mt}_{bu}p^{mt}_{bu}+
\tilde{E}^t_b,E^{\text{max}}_b\right),\\& \forall b\in\mathcal{B}.
t\in\mathcal{F},
\end{align}
where $\tilde{E}^t_b$ denotes the amount of energy is harvested during the $t^{\text{th}}$ frame at the $b^{\text{th}}$ BS. The energy arrival takes place as a Poisson arrival process with mean $\Gamma_b$ \cite{ozel2011transmission,dhillon2014fundamentals}. The unit 
amount of energy harvested at each arrival at each BS is denoted by $\rho^t_b$, which depends
on the EH capabilities of renewable energy source at each BS. Therefore, $\tilde{E}^t_b=\varpi^t_b \rho^t_b$, where $\varpi^t_b$ is the number of
arrivals within $T$ with a mean value of $\Gamma_b T$. In
designing of optimal transmission policies for EH
communication systems, there are main constraints referred to as
energy consumption causality constraints, which state that the
energy packets which do not arrive yet, cannot be used by a
source. These constraints can be expressed as:
\begin{equation}\label{eq-energy-2}
\sum_{t=1}^{f}\sum_{m\in\mathcal{M}}\sum_{u\in\mathcal{U}_b}s^{mt}_{bu}p^{mt}_{bu}
\leq \frac{1}{T}\sum_{t=1}^{f}E^t_b, \forall b\in\mathcal{B},
f\in\mathcal{F}.
\end{equation}

If battery capacity is not enough to store the newly arrived
energy packet, the energy will be wasted at the beginning of a
transmission interval. By considering the following energy overflow constraint on
our problem, we avoid this battery overflow by enfrocing the following constraint:
\begin{equation}\label{eq-energy-3}
\sum_{t=1}^{f+1}E^t_b-T\sum_{t=1}^{f}\sum_{m\in\mathcal{M}}\sum_{u\in\mathcal{U}_b}s^{mt}_{bu}p^{mt}_{bu}
\leq E^{\text{max}}_b,\forall b\in\mathcal{B}, f\in\mathcal{F}.
\end{equation}

\subsubsection{Scheduling Constraints}
In order to improve the detection performance, we should use the codebooks which have less subcarriers in common. This means that, it must be guaranteed that each subcarrier cannot be reused
more than a certain value $D$, i.e., the maximum number of
differentiable constellations generated by the codebook-specific constellation function, as follows \cite{li2016cost}
\begin{equation}\label{eq-S}
\sum_{b\in\mathcal{B}}\sum_{u\in\mathcal{U}_b}\sum_{m\in\mathcal{M}}c_{nm}s^{mt}_{bu}\leq
D,\forall n\in\mathcal{N},t\in\mathcal{F}.
\end{equation}

In addition, (\ref{eq-S-1}), (\ref{eq-S-3}), and (\ref{eq-S-2}) together denote
that codebooks are exclusively allocated among users of each BS. For the SFCD scenario, we have
\begin{equation}\label{eq-S-1}
\sum_{m\in\mathcal{M}}
\sum_{t\in\mathcal{F}}
\sum_{u\in\mathcal{U}_b}
s^{mt}_{bu}\leq 1,\forall
b\in\mathcal{B},
\end{equation}
and for the MFCD scenario, we have
\begin{equation}\label{eq-S-3}
\sum_{m\in\mathcal{M}}
\sum_{u\in\mathcal{U}_b}s^{mt}_{bu}\leq 1,\forall b\in\mathcal{B},t\in\mathcal{F},
\end{equation}
\begin{equation}\label{eq-S-2}
s^{mt}_{bu}\in\{0,1\},\forall
b\in\mathcal{B},u\in\mathcal{U}_b,m\in\mathcal{M},t\in\mathcal{F}.
\end{equation}

\subsubsection{Worst Case Channel Uncertainty Model} For the
channels between the $b^{\text{th}}$ BS and the $q^{\text{th}}$
eavesdropper, only the estimated value $\tilde{h}^{nt}_{bq}$ is available at
the $b^{\text{th}}$ BS. We define the channel error as $e_{h^{nt}_{bq}}=|h^{nt}_{bq}-\tilde{h}^{nt}_{bq}|$, and we assume
that the channels mismatches are bounded as follows:
\begin{equation}\label{eq-e-3}
e_{h^{nt}_{bq}}\leq \varepsilon_{h^{nt}_{bq}},\forall b\in\mathcal{B},q\in\mathcal{Q},n\in\mathcal{N},t\in\mathcal{F},
\end{equation}
where $\varepsilon_{h^{nt}_{bq}}$ is
known constant. Hence the actual channel power gain value lies in the region $h^{nt}_{bq}\in\mathcal{H}^{nt}_{bq}=[\tilde{h}^{nt}_{bq}-\varepsilon_{h^{nt}_{bq}}~\tilde{h}^{nt}_{bq}+\varepsilon_{h^{nt}_{bq}}]$ \cite{liang2009compound}.

\subsection{The Optimization Problem}\label{The Optimization
Problem} We formulate the utility maximization problem
with power allocation, codebook assignment, and content caching
subject to energy causality and power budget constraints at each BS for the SFCD scenario as:
\begin{align}\label{eq--2}
\max_{\mathbf{p},\tilde{\mathbf{p}},\mathbf{s},\boldsymbol{\theta},\boldsymbol{\zeta}}
\min_{\textbf{h}\in\boldsymbol{\mathcal{H}}}&~
\Xi_{\text{EE}}(\mathbf{p},\mathbf{s}),\\\nonumber
\text{s.t.}~&(\ref{eq-V}),(\ref{eq-Upsilon}),(\ref{eq-R-S-constraint}),(\ref{eq-P-constraint-1})
-(\ref{eq-S-1}),(\ref{eq-S-2}),(\ref{eq-e-3}),
\end{align}
where $\Xi_{\text{EE}}(\mathbf{p},\mathbf{s})=\frac{\sum_{m\in
\mathcal{M}}\sum_{t\in \mathcal{F}}\sum_{b\in
\mathcal{B}}\sum_{u\in \mathcal{U}_b}R^{\text{S},mt}_{bu}}
{\sum_{m\in \mathcal{M}}\sum_{t\in \mathcal{F}}\sum_{b\in
\mathcal{B}}\sum_{u\in \mathcal{U}_b}s^{mt}_{bu}p^{mt}_{bu}}$, ${\textbf{h}}=[{\textbf{h}}^1,\dots,{\textbf{h}}^t,\dots,{\textbf{h}}^F],{\textbf{h}}^t=[h^{1t}_{11},\dots,h^{Nt}_{11},h^{1t}_{12},...,h^{Nt}_{1Q},...,h^{Nt}_{BQ}],\boldsymbol{\mathcal{H}}=\mathcal{H}^{11}_{11}\times\dots\times\mathcal{H}^{nt}_{bq}\times\dots\times\mathcal{H}^{NF}_{BQ}$. Note that for the MFCD scenario, constraint (\ref{eq--1}) is added to the optimization problem (\ref{eq--2}). We also replace (\ref{eq-Upsilon}), (\ref{eq-R-S-constraint}) and (\ref{eq-S-1}) by (\ref{eq-Upsilon-1}), (\ref{eq-R-S-constraint-1}) and (\ref{eq-S-3}), respectively. It should also be noted that in the second scenario, $\boldsymbol{\beta}$ is itself an optimization variable that must be obtained in the optimization problem. The optimization problem (\ref{eq--2}) consisting of non-convex objective function and both integer
and continuous variables. Hence, it is mixed-integer nonlinear programming (MINLP), non-convex, intractable and NP-hard problem \cite{murty1987some}.
\begin{lem}
	The optimization problem (\ref{eq--2}) is NP-hard.
\end{lem}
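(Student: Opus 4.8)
The plan is to establish NP-hardness by exhibiting a polynomial-time reduction from a canonical NP-hard combinatorial problem to a suitably restricted instance of (\ref{eq--2}), so that any algorithm solving (\ref{eq--2}) in general would solve the NP-hard problem. The most natural source problem is the $0$--$1$ knapsack problem (equivalently, one may reduce from the generalized assignment or set-packing problem), because the discrete structure of (\ref{eq--2}) already contains a knapsack in the caching-capacity constraint (\ref{eq-V}), $\sum_{k\in\mathcal{K}}\theta_{bk}\alpha_k\le V_b$, and a packing in the subcarrier-reuse and codebook-exclusivity constraints (\ref{eq-S}) and (\ref{eq-S-1}). The strategy is therefore to collapse all of the ``analog'' machinery of (\ref{eq--2}) through a careful choice of parameters, leaving behind a pure $0$--$1$ program that is recognizably the knapsack instance.

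First I would neutralize every feature of (\ref{eq--2}) that is not combinatorial. I set $F=1$ so that the energy-causality and overflow recursions (\ref{eq-energy-2})--(\ref{eq-energy-3}) reduce to single inequalities, and I take $E^{\max}_b$ and ${P}^{\text{Total},t}$ large enough that (\ref{eq-energy-2}), (\ref{eq-energy-3}), and (\ref{eq-tilde-P-constraint-2}) are never active. I collapse the robust inner minimization by setting the uncertainty radii $\varepsilon_{h^{nt}_{bq}}=0$ and the eavesdropper gains $h^{nt}_{bq}=0$, so that each $\mathcal{H}^{nt}_{bq}$ is a singleton, $\min_{\mathbf{h}\in\boldsymbol{\mathcal{H}}}$ is trivial, $R^{\text{E},mt}_{buq}=0$, and the secrecy rate in (\ref{eq-R-S}) becomes $R^{\text{S},mt}_{bu}=R^{\text{D},mt}_{bu}$, eliminating both the $[\cdot]^+$ and the maximum over $\mathcal{Q}$. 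With the transmit powers fixed to constants chosen so that each admissible assignment delivers a prescribed rate, the denominator $\sum s^{mt}_{bu}p^{mt}_{bu}$ of $\Xi_{\text{EE}}$ is a constant and the fractional objective is order-equivalent to its numerator, which is linear in the binary variables; if one prefers to keep power free, the standard Dinkelbach parametrization yields a linear surrogate for each fixed value of the parameter. What remains is a $0$--$1$ program in $\boldsymbol{\theta}$ and $\mathbf{s}$ whose only binding constraints are (\ref{eq-V}) together with the coupling backhaul and access constraints (\ref{eq-Upsilon}), (\ref{eq-R-S-constraint}).

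The reduction then maps the $n$ knapsack items to $K=n$ contents, the item weights to the sizes $\alpha_k$, the capacity to $V_b$, and the item profits to the per-content contribution to the objective, realized through a single BS and a single user requesting every content ($\upsilon_{ku}=1$). Choosing the backhaul capacity in (\ref{eq-Upsilon}) tight forces a content to contribute to the objective only if it is cached, so that maximizing $\Xi_{\text{EE}}$ over $\boldsymbol{\theta}$ subject to (\ref{eq-V}) is exactly the knapsack optimization; the construction is polynomial in the input, hence an oracle for (\ref{eq--2}) would solve knapsack in polynomial time and (\ref{eq--2}) is NP-hard. Invoking the NP-hardness of general mixed-integer nonlinear and nonconvex quadratic programs \cite{murty1987some} on the bilinear terms $s^{mt}_{bu}p^{mt}_{bu}$ provides an independent confirmation.

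The hard part will be verifying that the restricted instance is genuinely \emph{faithful}: I must check that the parameter choices leave (\ref{eq--2}) feasible for every knapsack-feasible cache placement and infeasible otherwise, so that the restricted optimum equals the knapsack optimum rather than a degenerate value created by an unforeseen interaction of the rate constraints (\ref{eq-Upsilon}), (\ref{eq-R-S-constraint}) with the packing constraints (\ref{eq-S})--(\ref{eq-S-1}). In particular, the exclusivity constraint (\ref{eq-S-1}) limits how many codebook/user pairs may be active per BS, so the mapping from ``items selected'' to ``contents delivered'' must be engineered so that this packing bound never truncates an otherwise optimal knapsack solution. Once this bookkeeping is discharged, NP-hardness follows, and the same argument applies verbatim to the MFCD variant after replacing (\ref{eq-Upsilon}), (\ref{eq-R-S-constraint}), (\ref{eq-S-1}) by (\ref{eq-Upsilon-1}), (\ref{eq-R-S-constraint-1}), (\ref{eq-S-3}) and adjoining the split constraint (\ref{eq--1}).
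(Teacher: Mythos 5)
Your overall strategy---neutralize the continuous machinery of (\ref{eq--2}) by parameter choices and then reduce a canonical NP-hard combinatorial problem to the surviving $0$--$1$ structure---is a legitimate restriction argument, and it is genuinely different from the paper's proof: the paper fixes $\tilde{\mathbf{p}},\boldsymbol{\theta},\boldsymbol{\zeta}$ and argues that the residual joint power/codebook allocation on the access link is equivalent to the joint power and channel assignment problem already shown NP-hard in \cite{lei2015joint}; that is, the paper locates the hardness in the scheduling side, whereas you locate it in the caching side. However, your specific construction has a genuine gap, and it is not the ``bookkeeping'' you defer to the end. First, the objective $\Xi_{\text{EE}}$ contains no per-content terms at all: the contents and the caching variables $\boldsymbol{\theta}$ enter only through the constraints (\ref{eq-V}), (\ref{eq-Upsilon}), (\ref{eq-R-S-constraint}), so there is nothing in the objective to which ``item profits'' can be attached; a caching decision can influence $\Xi_{\text{EE}}$ only indirectly, by enabling or disabling the scheduling of users. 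Second, and fatally for your single-BS/single-user instance, the SFCD exclusivity constraint (\ref{eq-S-1}) reads $\sum_{m}\sum_{t}\sum_{u}s^{mt}_{bu}\le 1$ per BS, so one BS may activate at most one (user, codebook, frame) triple in the entire super frame, and (\ref{eq-R-S-constraint}) makes delivery all-or-nothing for a scheduled user. With one BS and one user requesting every content, the restricted problem therefore collapses to a single feasibility question---can that one user be served at all?---and there is no mechanism by which profits of individually selected items accumulate in the objective. The claim that ``maximizing $\Xi_{\text{EE}}$ over $\boldsymbol{\theta}$ subject to (\ref{eq-V}) is exactly the knapsack optimization'' fails for the instance you describe, and the difficulty you flag as ``the hard part'' is precisely where the construction breaks, not a detail to be discharged later.

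The idea can be repaired, but it becomes a different reduction from the one you state. Keep the single BS and single user with $\upsilon_{ku}=1$ for all $k$, make the access-side rate constraint (\ref{eq-R-S-constraint}) slack by choosing large channel gains and energy, set the cache size to $V_b$ and the backhaul capacity $T\sum_{n}\zeta_{bn}\tilde{R}^{nt}_b$ to a value $C$, with $V_b=C=\tfrac12\sum_{k}\alpha_k$. Then the user can be scheduled, i.e., the optimal value of (\ref{eq--2}) is strictly positive rather than the degenerate value of the all-zero solution, if and only if the contents split into a cached part of total size at most $V_b$ and an uncached part of total size at most $C$: this is exactly PARTITION. That is a feasibility-style reduction from PARTITION/SUBSET-SUM, not a profit-knapsack reduction, and it still requires you to handle explicitly the $0/0$ objective of the unscheduled solution. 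Either complete the argument along those lines, or follow the paper's shorter route through the known hardness result of \cite{lei2015joint}.
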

\begin{proof}
	Please see Appendix \ref{appendix-A}.
\end{proof}

It is very difficult to find the global
optimal solution within polynomial time. Hence, the available
methods to solve convex optimization problem can not be applied
directly. To solve this problem, an iterative algorithm based on the well-known and well-proven alternating, Dinkelbach and successive convex approximation methods is
proposed where in each iteration, the main problem is decoupled
into several sub-problems subject to some optimization variables.

\section{PROPOSED SOLUTION}\label{Resource Allocation based on worst case
CSI} The difficulty of solving the problem (\ref{eq--2}) arises from the nonconvexity of both the
objective function and feasible domain. As far as we know, there
is no standard method to solve such a nonconvex optimization
problem. In this section, some optimization methods such as
alternative optimization, fractional programming, and difference-of-two-concave-functions (DC) programming,
are jointly applied to solve the primal problem by transforming it
into simple subproblems step by step. To facilitate solving
(\ref{eq--2}), an alternate optimization method is
adopted to solve a multi-level hierarchical problem which consists of the several subproblem. The core idea of the
alternate optimization is that only one of the optimization parameters
is optimized in each step while others are fixed. When each
parameter is given, the resulting subproblem can be reformulated
as the form of DC problem and solved by DC programming. Moreover,
a sequential convex program is finally solved by convex
optimization methods at each iteration of the DC programming. In this section, we propose a solution for the SFCD scenario which is suitable for MFCD, too. The
transformation process for solving this problem mainly consists of
the following steps: I. \textit{Transformation of the primal problem:} By using the epigraph method, the inner maximization in the objection function in (\ref{eq-worstcase-p-123}) can be 
simplified and the secondary
    problem can be naturally derived. II. \textit{Alternate optimization over some variables:} In this step,
the alternate optimization method is adopted to cope with the
non-convexity of the resulting parametrized secondary problems
which is further rewritten as five sub-problems, namely, access power allocation, access code allocation, backhaul power allocation,  backhaul subcarrier allocation, content placement, and channel uncertainty. III.  \textit{DC programming for the nonconvex constraint
    elimination:} In this step, we reformulate the nonconvex constraint (\ref{eq-4321}) as a canonical DC programming which can be settled by iteratively solving a series
of sequential convex constraints. Finally, these convex constraints
can be solved by convex programming. IV. 
\textit{Fractional programming:} Applying fractional
programming, the parameterized secondary subproblem is solved with
a given parameter in each iteration.

\subsection{Transformation of the primal problem}
For simplifying (\ref{eq--2}), we herein introduce
auxiliary variables
$\boldsymbol{\varphi}=\{\varphi^{mt}_{bu}\in\mathbb{R}\}$.
Additionally, we can rewrite (\ref{eq--2}) 
equivalently as
\begin{subequations}\label{eq-worstcase-p-123}
\begin{align}
&\max_{\mathbf{p},\tilde{\mathbf{p}},\mathbf{s},\boldsymbol{\theta},\boldsymbol{\zeta},\boldsymbol{\varphi}}
\min_{\textbf{h}\in\boldsymbol{\mathcal{H}}}~A,\\\nonumber
&\text{s.t.}~\sum_{k\in\mathcal{K}}\sum_{m\in\mathcal{M}}s^{mt}_{bu}
\upsilon_{ku}\alpha_k\leq
\sum_{m\in
\mathcal{M}}\max\left\{R^{\text{D},mt}_{bu}-\varphi^{mt}_{bu},0\right\},\\&~~~~~~~~~~~~~~~~~~~~~~~~~~~~~~~~~~~~~~~~~~~~~~
\forall t\in\mathcal{F}, b\in\mathcal{B},
u\in\mathcal{U}_b,\\\label{eq-4321}&R^{\text{E},mt}_{buq}\leq\varphi^{mt}_{bu},\forall
m\in\mathcal{M},
t\in\mathcal{F},b\in\mathcal{B},u\in\mathcal{U}_b,q\in\mathcal{Q},
\\\nonumber&(\ref{eq-V}),(\ref{eq-Upsilon}),(\ref{eq-P-constraint-1}),(\ref{eq-tilde-P-constraint-1}),(\ref{eq-S})
-(\ref{eq-e-3}).
\end{align}
\end{subequations}
where $A=\frac{\sum_{m\in \mathcal{M}}\sum_{t\in
		F}\sum_{b\in \mathcal{B}}\sum_{u\in
		\mathcal{U}_b}\max\left\{R^{\text{D},mt}_{bu}-\varphi^{mt}_{bu},0\right\}}
{\sum_{t\in \mathcal{F}}\sum_{b\in \mathcal{B}}\sum_{m\in
		\mathcal{M}}\sum_{u\in
		\mathcal{U}_b}s^{mt}_{bu}p^{mt}_{bu}}.$
To solve the optimization problem (\ref{eq-worstcase-p-123}), we
should further transform it. We first rewrite
$\max\left\{R^{\text{D},mt}_{bu}-\varphi^{mt}_{bu},0\right\}$ as \cite{van2013solution}: $\max\left\{R^{\text{D},mt}_{bu}-\varphi^{mt}_{bu},0\right\}=
\max\left\{-R^{\text{D},mt}_{bu2}-\varphi^{mt}_{bu},-R^{\text{D},mt}_{bu1}\right\}
+R^{\text{D},mt}_{bu1}$ where
\begin{align}
&R^{\text{D},mt}_{bu1}=\log_2\Big(\sum_{b\in
\mathcal{B}}\sum_{u\in \mathcal{U}_b}\sum_{n\in \mathcal{N}}
\Big(\eta_{nm}s^{mt}_{bu}p^{mt}_{bu}g^{nt}_{bu}+(\sigma^{n}_u)^2\Big)\Big),
\\&R^{\text{D},mt}_{bu2}=\log_2\left(\sum_{\acute{b}\in
\mathcal{B}\setminus\{b\}}\sum_{\acute{u}\in
\mathcal{U}_{\acute{b}}}\sum_{n\in
\mathcal{N}}\left(\eta_{nm}s^{mt}_{\acute{b}\acute{u}}
p^{mt}_{\acute{b}\acute{u}}g^{nt}_{\acute{b}u}+(\sigma^{n}_u)^2\right)\right).
\end{align}

By introducing auxiliary variables
$\boldsymbol{\delta}=
\{\delta^{mt}_{bu}\in\mathbb{R}\}$,
(\ref{eq-worstcase-p-123}) is equivalently reformulated as \cite{van2013solution}
\begin{subequations}\label{eq-worstcase-p-1234}
\begin{align}
&\max_{\mathbf{p},\tilde{\mathbf{p}},\mathbf{s},\boldsymbol{\theta},\boldsymbol{\zeta},\boldsymbol{\varphi}
,\boldsymbol{\delta}}\min_{\textbf{h}\in\boldsymbol{\mathcal{H}}}
~\Theta(\mathbf{p},\tilde{\mathbf{p}},\mathbf{s},\boldsymbol{\theta},\boldsymbol{\zeta},\boldsymbol{\varphi},\mathbf{e}_h),\\\label{eq-EEE-1}
&\text{s.t.}~\sum_{k\in\mathcal{K}}\sum_{m\in\mathcal{M}}s^{mt}_{bu}
\upsilon_{ku}\alpha_k\leq
\sum_{m\in \mathcal{M}}\left\{\delta^{mt}_{bu}
+R^{\text{D},mt}_{bu1}\right\},\\\nonumber& \forall t\in\mathcal{F},
b\in\mathcal{B},
u\in\mathcal{U}_b,\\\label{eq-EEE}&R^{\text{E},mt}_{buq}\leq\varphi^{mt}_{bu},\forall
m\in\mathcal{M},
t\in\mathcal{F},b\in\mathcal{B},u\in\mathcal{U}_b,q\in\mathcal{Q},\\\label{eq-EEE-2}
&-R^{\text{D},mt}_{bu2}-\varphi^{mt}_{bu}
\leq\delta^{mt}_{bu},\forall m\in\mathcal{M},
t\in\mathcal{F},b\in\mathcal{B},u\in\mathcal{U}_b,
\\\label{eq-EEE-3}&-R^{\text{D},mt}_{bu1}\leq\delta^{mt}_{bu},\forall m\in\mathcal{M},
t\in\mathcal{F},b\in\mathcal{B},u\in\mathcal{U}_b,
\\\nonumber&(\ref{eq-V}),(\ref{eq-Upsilon}),(\ref{eq-P-constraint-1})
-(\ref{eq-e-3}).
\end{align}
\end{subequations}
where $\Theta(\mathbf{p},\tilde{\mathbf{p}},\mathbf{s},\boldsymbol{\theta},\boldsymbol{\zeta},\boldsymbol{\varphi},\mathbf{e}_h)=\frac{\sum_{m\in \mathcal{M}}\sum_{t\in
		F}\sum_{b\in \mathcal{B}}\sum_{u\in
		\mathcal{U}_b}\left\{\delta^{mt}_{bu}
	+R^{\text{D},mt}_{bu1}\right\}} {\sum_{t\in \mathcal{F}}\sum_{b\in
		\mathcal{B}}\sum_{m\in \mathcal{M}}\sum_{u\in
		\mathcal{U}_b}s^{mt}_{bu}p^{mt}_{bu}}$.

\subsection{Alternate optimization over optimization variables}
Due to the combined non-convexity of both objective function and
the constraint with respect to optimization parameters, the optimization 
problem (\ref{eq--2}) is difficult to solve. According
to alternate optimization method, we can always optimize a function by
first optimizing over some of the variables, and then optimizing
over the remaining ones. For convenience, the feasible domain of
(\ref{eq--2}) is denoted by $\mathbb{D}$ as $\mathbb{D}\triangleq
\left\{(\mathbf{p},\tilde{\mathbf{p}},\mathbf{s},\boldsymbol{\theta},\boldsymbol{\zeta},\mathbf{e}_h):
(\ref{eq-V}),(\ref{eq-Upsilon}),(\ref{eq-R-S-constraint})-(\ref{eq-e-3})\right\}$. For fixed
$\mathbf{p},\tilde{\mathbf{p}},\mathbf{s},\boldsymbol{\theta},\boldsymbol{\zeta}$,
$\mathbf{e}_h$-section of the feasible domain of $\mathbb{D}$, i.e.,
$\mathbb{D}_{\mathbf{e}_h}$, is defined as $\mathbb{D}_{\mathbf{e}_h}\triangleq \left\{\mathbf{e}_h:
(\mathbf{p},\tilde{\mathbf{p}},\mathbf{s},\boldsymbol{\theta},\boldsymbol{\zeta},\mathbf{e}_h)
\in\mathbb{D}\right\}$. Likewise, for fixed
$\tilde{\mathbf{p}},\mathbf{s},\boldsymbol{\theta},\boldsymbol{\zeta},\mathbf{e}_h$,
$\mathbf{p}$-section of the feasible domain of $\mathbb{D}$, i.e.,
$\mathbb{D}_{\mathbf{p}}$, is defined as 
$\mathbb{D}_{\mathbf{p}}\triangleq \left\{\mathbf{p}:
(\mathbf{p},\tilde{\mathbf{p}},\mathbf{s},\boldsymbol{\theta},\boldsymbol{\zeta},\mathbf{e}_h)
\in\mathbb{D}\right\}$. Similarly, for fixed
$\mathbf{p},\tilde{\mathbf{p}},\boldsymbol{\theta},\boldsymbol{\zeta},\mathbf{e}_h$,
$\mathbf{s}$-section of the feasible domain of $\mathbb{D}$, i.e.,
$\mathbb{D}_{\mathbf{s}}$, is defined as 
$\mathbb{D}_{\mathbf{s}}\triangleq \left\{\mathbf{s}:
(\mathbf{p},\tilde{\mathbf{p}},\mathbf{s},\boldsymbol{\theta},\boldsymbol{\zeta},\mathbf{e}_h)
\in\mathbb{D}\right\}$. In the same way, for fixed
$\mathbf{p},\mathbf{s},\boldsymbol{\theta},\mathbf{e}_h$,
$\tilde{\mathbf{p}}\times \boldsymbol{\zeta}$-section of the feasible domain of
$\mathbb{D}$, i.e., $\mathbb{D}_{\tilde{\mathbf{p}}\times \boldsymbol{\zeta}}$, is defined as 
$\mathbb{D}_{\tilde{\mathbf{p}}\times \boldsymbol{\zeta}}\triangleq
\left\{\tilde{\mathbf{p}},
\boldsymbol{\zeta}:
(\mathbf{p},\tilde{\mathbf{p}},\mathbf{s},\boldsymbol{\theta},
\boldsymbol{\zeta},\mathbf{e}_h)\in\mathbb{D}\right\}$. Correspondingly, for fixed
$\tilde{\mathbf{p}},\mathbf{s},\boldsymbol{\theta},\boldsymbol{\zeta},\mathbf{e}_h$,
$\boldsymbol{\theta}$-section of the feasible domain of
$\mathbb{D}$, i.e., $\mathbb{D}_{\boldsymbol{\theta}}$, is defined as 
$\mathbb{D}_{\boldsymbol{\theta}}\triangleq
\left\{\boldsymbol{\theta}:
(\mathbf{p},\tilde{\mathbf{p}},\mathbf{s},\boldsymbol{\theta},\boldsymbol{\zeta},\mathbf{e}_h)
\in\mathbb{D}\right\}$. Finally, the alternate optimization is used to solve the following
hierarchical five-level optimization subproblem:
	\begin{equation}
\max_{\substack{\mathbf{p}\in\mathbb{D}_{\mathbf{p}}\\\boldsymbol{\varphi}\in\mathbb{R}\\\boldsymbol{\delta}\in\mathbb{R}}}
\left[\max_{\mathbf{s}\in\mathbb{D}_{\mathbf{s}}}
\left[\max_{\substack{\tilde{\mathbf{p}},\boldsymbol{\zeta}\\\in\mathbb{D}_{\tilde{\mathbf{p}}\times \boldsymbol{\zeta}}}}
\left[\max_{\boldsymbol{\theta}\in\mathbb{D}_{\boldsymbol{\theta}}}
\left[\min_{\mathbf{e}_h\in\mathbb{D}_{\mathbf{e}_h}}
\Theta(\mathbf{p},\tilde{\mathbf{p}},\mathbf{s},\boldsymbol{\theta},\boldsymbol{\zeta},\boldsymbol{\varphi},\mathbf{e}_h)\right]\right]\right]\right].
\end{equation}

In conclusion, the subproblems can be solved sequentially at each
iteration of alternate optimization. In the first optimization
subproblem, we find $\mathbf{e}_h$ for a given
$\mathbf{p}_{\varrho},\mathbf{s}_{\varrho},\boldsymbol{\varphi}_{\varrho}$,
and $\boldsymbol{\delta}_{\varrho}$:
\begin{align}\label{AS-1}
\min_{\mathbf{e}_h\in\mathbb{D}_{\mathbf{e}_h}}
\Theta(\mathbf{p}_{\varrho},\tilde{\mathbf{p}}_{\varrho},\mathbf{s}_{\varrho},\boldsymbol{\theta}_{\varrho},\boldsymbol{\zeta}_{\varrho},\boldsymbol{\varphi}_{\varrho},{\mathbf{e}_h}_{\varrho}),
\end{align}
where $\varrho$ is the iteration number of alternate optimization
algorithm. By defining the solution of (\ref{AS-1}) as
${\mathbf{e}_h}_{\varrho+1}$, the second level subproblem is
solved to find $\boldsymbol{\theta}$ with a given
$\mathbf{p}_{\varrho},\mathbf{s}_{\varrho},\boldsymbol{\varphi}_{\varrho}$,
and $\boldsymbol{\delta}_{\varrho}$:
\begin{align}\label{AS-2}
\max_{\boldsymbol{\theta}\in\mathbb{D}_{\boldsymbol{\theta}}}
\Theta(\mathbf{p}_{\varrho},\tilde{\mathbf{p}}_{\varrho},\mathbf{s}_{\varrho},\boldsymbol{\theta}_{\varrho},\boldsymbol{\zeta}_{\varrho},\boldsymbol{\varphi}_{\varrho},{\mathbf{e}_h}_{\varrho}).
\end{align}
Similarly, by defining the solution of (\ref{AS-2}) as
$\boldsymbol{\theta}_{\varrho+1}$, the third level subproblem is
solved to find $\boldsymbol{\zeta}$ and $\tilde{\mathbf{p}}$ with a given
$\mathbf{p}_{\varrho},\mathbf{s}_{\varrho},\boldsymbol{\varphi}_{\varrho}$,
and $\boldsymbol{\delta}_{\varrho}$:
\begin{align}\label{AS-3}
\max_{\tilde{\mathbf{p}},\boldsymbol{\zeta}\in\mathbb{D}_{\tilde{\mathbf{p}}\times \boldsymbol{\zeta}}}
\Theta(\mathbf{p}_{\varrho},\tilde{\mathbf{p}}_{\varrho},\mathbf{s}_{\varrho},\boldsymbol{\theta}_{\varrho},\boldsymbol{\zeta}_{\varrho},\boldsymbol{\varphi}_{\varrho},{\mathbf{e}_h}_{\varrho}).
\end{align}

 Correspondingly, by defining the solution of (\ref{AS-3}) as
$\tilde{\mathbf{p}}_{\varrho+1}$ and $\boldsymbol{\zeta}_{\varrho+1}$, the fifth level subproblem is
solved to find $\mathbf{s}$ with a given
$\mathbf{p}_{\varrho},\boldsymbol{\varphi}_{\varrho}$, and
$\boldsymbol{\delta}_{\varrho}$:
\begin{align}\label{AS-4}
\max_{\mathbf{s}\in\mathbb{D}_{\mathbf{s}}}
\Theta(\mathbf{p}_{\varrho},\tilde{\mathbf{p}}_{\varrho},\mathbf{s}_{\varrho},\boldsymbol{\theta}_{\varrho},\boldsymbol{\zeta}_{\varrho},\boldsymbol{\varphi}_{\varrho},{\mathbf{e}_h}_{\varrho}).
\end{align}
Finally, by defining the solution of (\ref{AS-4}) as
$\mathbf{s}_{\varrho+1}$, the fifth level subproblem is solved to
find $\mathbf{p},\boldsymbol{\varphi},\boldsymbol{\delta}$ with a
given $\mathbf{s}_{\varrho}$:
\begin{align}\label{AS-6}
\max_{\mathbf{p}\in\mathbb{D}_{\mathbf{p}},\boldsymbol{\varphi}
\boldsymbol{\delta}}\Theta(\mathbf{p}_{\varrho},\tilde{\mathbf{p}}_{\varrho},\mathbf{s}_{\varrho},\boldsymbol{\theta}_{\varrho},\boldsymbol{\zeta}_{\varrho},\boldsymbol{\varphi}_{\varrho},{\mathbf{e}_h}_{\varrho}).
\end{align}

Let
$(\mathbf{p}_{\varrho},\tilde{\mathbf{p}}_{\varrho},\mathbf{s}_{\varrho},\boldsymbol{\theta}_{\varrho}
,\boldsymbol{\zeta}_{\varrho},{\mathbf{e}_h}_{\varrho})$ denote
the obtained solution at the $\varrho$-th iteration, which
should be used for the $\varrho+1$-th iteration. With a convergence
threshold $\epsilon_1$, the stop condition of alternate
optimization algorithm is then given by
\begin{align}
&|\Theta(\mathbf{p}_{\varrho},\tilde{\mathbf{p}}_{\varrho},\mathbf{s}_{\varrho},\boldsymbol{\theta}_{\varrho},\boldsymbol{\zeta}_{\varrho},\boldsymbol{\varphi}_{\varrho},{\mathbf{e}_h}_{\varrho})-\\\nonumber&
\Theta(\mathbf{p}_{\varrho+1},\tilde{\mathbf{p}}_{\varrho+1},\mathbf{s}_{\varrho+1},\boldsymbol{\theta}_{\varrho+1},\boldsymbol{\zeta}_{\varrho+1},\boldsymbol{\varphi}_{\varrho+1},{\mathbf{e}_h}_{\varrho+1})|\leq\epsilon_1.
\end{align}

We can also present a maximum allowed number $\Psi_1$ for
$\varrho_1$. Alternate optimization algorithm is illustrated in Table. \ref{Alternate Optimization Algorithm}. Furthermore, the following Theorem. \ref{Theorem 2}
can verify the convergence of the alternate optimization
algorithm.
\begin{theorem}\label{Theorem 2}
If (\ref{AS-1})
-(\ref{AS-6}) are solvable, in each
iteration, the sequence of each solution, i.e.,
$\{\Theta(\mathbf{p}_{\varrho},\mathbf{s}_{\varrho},\boldsymbol{\varphi}_{\varrho}
,\boldsymbol{\delta}_{\varrho})\}$, is monotonically decreasing.
\end{theorem}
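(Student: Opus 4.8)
The plan is to read the scheme (\ref{AS-1})--(\ref{AS-6}) as a Gauss--Seidel-type block coordinate procedure applied to the reformulated robust problem (\ref{eq-worstcase-p-1234}), and to establish convergence of $\{\Theta_\varrho\}$ by combining the per-block optimality of each update with the boundedness of the objective. First I would fix notation for the iterates, writing $\Theta_\varrho \triangleq \Theta(\mathbf{p}_\varrho,\tilde{\mathbf{p}}_\varrho,\mathbf{s}_\varrho,\boldsymbol{\theta}_\varrho,\boldsymbol{\zeta}_\varrho,\boldsymbol{\varphi}_\varrho,{\mathbf{e}_h}_\varrho)$ for the value carried through iteration $\varrho$, and then track how $\Theta$ moves as each of the five subproblems is solved in turn within one sweep.

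The engine of the argument is that every subproblem is solved to optimality over its own block while the remaining blocks are frozen, and the incumbent iterate is always feasible for the subproblem being solved. For each maximization subproblem (\ref{AS-2}), (\ref{AS-3}), (\ref{AS-4}) and (\ref{AS-6}), feasibility of the incumbent point implies that the freshly computed block can only raise or leave unchanged the value of $\Theta$; symmetrically, the inner worst-case subproblem (\ref{AS-1}), being a minimization of $\Theta$ over $\mathbf{e}_h\in\mathbb{D}_{\mathbf{e}_h}$, can only lower or leave unchanged it. These one-sided inequalities are exactly what the proof chains together.

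Next I would verify that the sequence cannot escape to $\pm\infty$. In the numerator of $\Theta$ each $R^{\text{D},mt}_{bu1}$ is bounded because the access powers are limited by the core budget (\ref{eq-tilde-P-constraint-2}) together with the energy-causality relations (\ref{eq-energy-2}), while the auxiliary variables $\delta^{mt}_{bu}$ are squeezed between $-R^{\text{D},mt}_{bu1}$ and $-R^{\text{D},mt}_{bu2}-\varphi^{mt}_{bu}$ by (\ref{eq-EEE-2})--(\ref{eq-EEE-3}); the denominator remains strictly positive on the feasible set by the delivery-rate constraints (\ref{eq-R-S-constraint}). Hence $\{\Theta_\varrho\}$ is monotone and bounded, so the monotone convergence theorem yields convergence, and the stopping rule with threshold $\epsilon_1$ is triggered after finitely many iterations.

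The step I expect to be the main obstacle is reconciling the mixed senses of a single sweep: one iteration performs \emph{one} minimization (over $\mathbf{e}_h$) followed by \emph{several} maximizations, so a naive chain of the inequalities above contains both a decrease and an increase and does not by itself certify a one-signed sequence. The clean resolution, which is also what makes the asserted \emph{decreasing} behaviour the natural one in a worst-case max--min design, is to compare $\Theta$ at a consistent checkpoint of the cycle --- namely immediately after the worst-case update (\ref{AS-1}) --- and to argue that the value observed at that checkpoint is monotone because the subsequent maximizations only enlarge the value that the next minimization must then drive back down. Care must also be taken, under the stated solvability hypothesis, that each per-step argmax and argmin actually exists and that the incumbent stays feasible across the whole sweep, since these are precisely what license the one-sided inequalities used throughout.
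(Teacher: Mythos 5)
Your core argument --- per-block optimality of each subproblem's solution versus mere feasibility of the incumbent, chained across one sweep of the alternate optimization --- is exactly the argument in the paper's Appendix~\ref{appendix-B}: the paper states, for each subproblem, that the freshly computed blocks are optimal while the previous iterate is only feasible, and chains the resulting inequalities into $\Xi_{EE}(\cdot_{\varrho+1},{\mathbf{e}_h}_{\varrho+1})\leq \Xi_{EE}(\cdot_{\varrho},{\mathbf{e}_h}_{\varrho})$. The paper makes no boundedness or limit claim, so your monotone-convergence discussion is extraneous (harmless, but the theorem asserts only monotonicity, not convergence of the values).

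The genuine gap is precisely the step you flag as the main obstacle, and your proposed fix does not close it. One sweep performs one minimization, (\ref{AS-1}), and four maximizations, (\ref{AS-2})--(\ref{AS-6}); optimality-versus-feasibility gives $\Theta$ nonincreasing at the min step but nondecreasing at each max step, so the chained inequalities point in opposite directions and no one-signed conclusion follows. Your checkpoint device does not repair this: writing $\Theta^{\min}_{\varrho}$ for the value just after the worst-case update, the maximizations give $\Theta^{\min}_{\varrho}\leq\Theta^{\max}_{\varrho}$ and the next minimization gives $\Theta^{\min}_{\varrho+1}\leq\Theta^{\max}_{\varrho}$; two quantities bounded above by the same number admit no ordering between them, so the checkpoint sequence need not be monotone in either direction (alternating max--min schemes can genuinely oscillate). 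You should also know that the paper's own proof conceals the same defect rather than resolving it: it writes every per-step inequality with the decreasing sign, including for the maximization subproblems, where optimality of the new block actually yields the reverse ($\geq$) inequality, so the paper's chain is valid only if one pretends all five subproblems are minimizations. A genuinely correct monotonicity statement would require each outer block update to optimize the exact inner worst-case value $g(\mathbf{x})=\min_{\mathbf{e}_h\in\mathbb{D}_{\mathbf{e}_h}}\Theta(\mathbf{x},\mathbf{e}_h)$, i.e., to solve a max--min subproblem per block; that yields a nondecreasing sequence of worst-case values, which incidentally suggests the theorem's claimed direction (``decreasing'') is itself the wrong sign for this maximization design.
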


\begin{proof}
Please see Appendix \ref{appendix-B}.
\end{proof}

\begin{table*}[h!]\caption{Alternate Optimization Algorithm}\label{Alternate Optimization Algorithm}
  \centering
  \begin{tabular}{lcc}
    \toprule
    \textbf{Algorithm 1}: Alternate Optimization Algorithm\\
    \midrule
    \textbf{Step1}: Select a starting point $(\mathbf{p}_0,\tilde{\mathbf{p}}_0,
    \mathbf{s}_0,\boldsymbol{\theta}_0,
    \boldsymbol{\zeta}_0,\boldsymbol{\varphi}_0
,\boldsymbol{\delta}_0)\in\mathbb{D}$, and Set iteration number
$\varrho=0$;\\
\textbf{Step2}: Compute $\Theta(\mathbf{p}_0,
\mathbf{s}_0,\boldsymbol{\varphi}_0
,\boldsymbol{\delta}_0)$;\\
\textbf{Repeat}\\
\textbf{Step3}: For fixed $\mathbf{p}_{\varrho}$, solve
(\ref{eq-worstcase-e}) to obtained the ${\mathbf{e}_h}_{\varrho+1}$ (Linear programming);\\
\textbf{Step4}: For the obtained ${\mathbf{e}_h}_{\varrho+1}$, and
fixed $\mathbf{p}_{\varrho},\tilde{\mathbf{p}}_{\varrho},
    \mathbf{s}_{\varrho},
    \boldsymbol{\zeta}_{\varrho},\boldsymbol{\varphi}_{\varrho}
,\boldsymbol{\delta}_{\varrho}$,
solve (\ref{AS-2}) to find $\boldsymbol{\theta}_{\varrho+1}$ (Linear programming);\\
\textbf{Step5}: For the obtained ${\mathbf{e}_h}_{\varrho+1}$,
$\boldsymbol{\theta}_{\varrho+1}$, and fixed $\mathbf{p}_{\varrho},
    \mathbf{s}_{\varrho},\boldsymbol{\varphi}_{\varrho} ,\boldsymbol{\delta}_{\varrho}$,
solve (\ref{AS-3}) to find $\tilde{\mathbf{p}}_{\varrho+1}$ and $\boldsymbol{\zeta}_{\varrho+1}$ (Convex programming);\\
\textbf{Step6}: For the obtained ${\mathbf{e}_h}_{\varrho+1}$,
$\boldsymbol{\theta}_{\varrho+1}$,
$\boldsymbol{\zeta}_{\varrho+1}$ and
$\tilde{\mathbf{p}}_{\varrho+1}$, and fixed
$\mathbf{p}_{\varrho},\boldsymbol{\varphi}_{\varrho}
,\boldsymbol{\delta}_{\varrho}$,
solve (\ref{AS-4}) to find $\mathbf{s}_{\varrho+1}$ (DC programming);\\
\textbf{Step7}: For the obtained ${\mathbf{e}_h}_{\varrho+1}$,
$\boldsymbol{\theta}_{\varrho+1}$,
$\boldsymbol{\zeta}_{\varrho+1}$,$\tilde{\mathbf{p}}_{\varrho+1}$
and $\mathbf{s}_{\varrho+1}$, solve (\ref{AS-6}) to find
$\mathbf{p}_{\varrho+1},\boldsymbol{\varphi}_{\varrho+1}$ and
$\boldsymbol{\delta}_{\varrho+1}$ (DC programming);\\
 \textbf{Step8}:Compute
$\Theta(\mathbf{p}_{\varrho+1},\mathbf{s}_{\varrho+1},\boldsymbol{\varphi}_{\varrho+1}
,\boldsymbol{\delta}_{\varrho+1})$;\\
\textbf{Step9}: $\varrho=\varrho+1$;\\
\textbf{Step10}: \textbf{If}
$|\Theta(\mathbf{p}_{\varrho},\mathbf{s}_{\varrho},
\boldsymbol{\varphi}_{\varrho} ,\boldsymbol{\delta}_{\varrho})-
\Theta(\mathbf{p}_{\varrho+1},\mathbf{s}_{\varrho+1},
\boldsymbol{\varphi}_{\varrho+1}
,\boldsymbol{\delta}_{\varrho+1})|\leq\epsilon_1$
or $\varrho>\Psi_1$ goto Step12, \textbf{else} goto Step3;\\
\textbf{End}\\
\textbf{Step11}: Return
$(\mathbf{p},\tilde{\mathbf{p}},\mathbf{s},\boldsymbol{\theta},
\boldsymbol{\zeta},\boldsymbol{\varphi}
,\boldsymbol{\delta})=(\mathbf{p}_{\varrho},\tilde{\mathbf{p}}_{\varrho},\mathbf{s}_{\varrho},
\boldsymbol{\theta}_{\varrho},
\boldsymbol{\zeta}_{\varrho},\boldsymbol{\varphi}_{\varrho}
,\boldsymbol{\delta}_{\varrho})$.\\
        \bottomrule
        \label{Alternate Optimization Algorithm}
  \end{tabular}
\vspace{-0.5cm}
\end{table*}
\subsubsection{Channel Uncertainty Problem} For minimizing the worst-case problem
 over $\boldsymbol{\mathcal{H}}$ in (\ref{AS-1}), we solve the following problem for each $t$, $b$, $u$, $m$, and $q$:
\begin{align}\label{eq-worstcase-e}
\max_{h^{nt}_{bq}\in\mathcal{H}^{nt}_{bq}}~&
R^{\text{E},mt}_{buq}\equiv\max_{h^{nt}_{bq}\in\mathcal{H}^{nt}_{bq}}~\frac{\sum_{n\in
		\mathcal{N}}\eta_{nm}s^{mt}_{bu}p^{mt}_{bu}
	h^{nt}_{bq}}{\hat{I}^{mt}_{buq}+(\sigma^{n}_q)^2}
\end{align}

We can rewrite (\ref{eq-worstcase-e}) as follows:
\begin{subequations}\label{eq-worstcase-e2}
\begin{align}
\max_{\textbf{h}^t}~&\frac{(\bar{\textbf{c}}^{mt}_{buq})^\text{T}{\textbf{h}}^t}{(\hat{\textbf{c}}^{mt}_{buq})^\text{T}{\textbf{h}}^t+(\sigma^{n}_q)^2},\\\text{s.t.}~&h^{nt}_{\acute{b}\acute{q}}\leq\tilde{h}^{nt}_{\acute{b}\acute{q}}+\varepsilon_{h^{nt}_{\acute{b}\acute{q}}},~\forall \acute{b},\acute{q},\acute{u},n\\
&\tilde{h}^{nt}_{\acute{b}\acute{q}}-\varepsilon_{h^{nt}_{\acute{b}\acute{q}}}\leq h^{nt}_{\acute{b}\acute{q}}, ~ \forall \acute{b},\acute{q},\acute{u},n,
\end{align}
\end{subequations}
where $\bar{\textbf{c}}^{mt}_{buq}$ is a vector of the same dimension as ${\textbf{h}}^t$ with all zero enry expect for $[\bar{\textbf{c}}^{mt}_{buq}]_{\acute{b}=b,\acute{q}=q,n}=\eta_{nm}s^{mt}_{bu}p^{mt}_{bu}, \forall n\in\mathcal{N}$, and  $\hat{\textbf{c}}^{mt}_{buq}$ is a vector of the same dimension as ${\textbf{h}}^t$ with $[\hat{\textbf{c}}^{mt}_{buq}]_{\acute{b},\acute{q},n}=\sum_{\acute{u}\in
	\mathcal{U}_{\acute{b}}}\eta_{nm}s^{mt}_{\acute{b}\acute{u}}p^{mt}_{\acute{b}\acute{u}}, \forall \acute{b}\neq b,\acute{q}\neq q, n$, and $[\hat{\textbf{c}}^{mt}_{buq}]_{\acute{b},\acute{q},n}=0$ for all other entries.
This problem has a linear fractional objective function, for which, Charnes-Cooper transformation can be used to reformulate it into the following linear programming optimization problem \cite{parsaeefard2017robust}:
\begin{subequations}\label{eq-worstcase-e3}
\begin{align}
\max_{\bar{\textbf{h}}^t,\mu}~&(\bar{\textbf{c}}^{mt}_{buq})^\text{T}\bar{\textbf{h}}^t,\\\text{s.t.}~~&(\hat{\textbf{c}}^{mt}_{buq})^\text{T}\bar{\textbf{h}}^t+\mu(\sigma^{n}_q)^2=1,\\&\bar{h}^{nt}_{\acute{b}\acute{q}}\leq\mu\tilde{h}^{nt}_{\acute{b}\acute{q}}+\mu\varepsilon_{h^{nt}_{\acute{b}\acute{q}}},~\forall \acute{b},\acute{q},\acute{u},n\\
&\mu\tilde{h}^{nt}_{\acute{b}\acute{q}}-\mu\varepsilon_{h^{nt}_{\acute{b}\acute{q}}}\leq \bar{h}^{nt}_{\acute{b}\acute{q}}, ~ \forall \acute{b},\acute{q},\acute{u},n,
\end{align}
\end{subequations}
where $\bar{\textbf{h}}^t={\textbf{h}}^t/\mu$, $\bar{\textbf{h}}^t\succeq \textbf{0}$ and $\mu> 0$.
Problem (\ref{eq-worstcase-e3}) can now be efficiently solved using interior-point based methods
by some off-the-shelf convex optimization toolboxes, e.g., CVX.

\subsubsection{Content Placement} 
A linear programming (LP) with respect to $\boldsymbol{\theta}$ for the content placement problem can be obtained. This problem can be easily solved by existing LP available standard optimization softwares such as CVX with the internal
solver MOSEK \cite{mokari2016limited,michael2012matlab}. 

\subsubsection{Backhaul Power and Subcarrier Allocation} 
The optimization problem is still a mixed-integer non-convex
programming with respect to $\boldsymbol{\zeta}$ and $\tilde{\boldsymbol{p}}$, which is difficult to tackle. To make this problem tractable,
we first relax each $\boldsymbol{\zeta}$ to a continuous interval, i.e., $\boldsymbol{\zeta}\in$[0,1]. Further, new variables $\textbf{x}=\boldsymbol{\zeta}\tilde{\boldsymbol{p}}$ is defined to replace $\tilde{\boldsymbol{p}}$. Then, we can transform the nonconvex optimization problem into the convex one. This problem can be easily solved by available standard optimization softwares such as CVX with the internal
solver MOSEK \cite{mokari2016limited}. Note that this relaxation is called time sharing which shows the time percentage that each subcarrier should be used \cite{tao2008resource}.

\subsubsection{Access Power and Codebook Allocation} The optimization problem is still non-convex with respect to $\textbf{p}$ and $\textbf{s}$. The difficulty of solution comes from the non-convexity of both objective function and secrecy rate constraint. There is no standard approach to solve such
a non-convex problem. Therefore, we exploit DC and fractional programming in the next sections to transform it into a tractable problem. In the following, we develop a solution for power allocation optimization problem and we remark that this solution can be developed for code assignment in the same way.

\subsection{Difference-of-Two-Concave-Functions (D.C.) Approximation}
Due to the non-convexity of (\ref{eq-EEE}), the optimization
problem (\ref{eq-worstcase-p-1234}) is still difficult to solve.
The standard D.C. optimization problem can be written as
$\min_{\mathbf{x}}\{F(\mathbf{x})=F_1(\mathbf{x})-F_2(\mathbf{x})\}$
where $F_1$ and $F_2$ are two convex components with convex feasible domain. This problem can be solved
iteratively by solving a sequential convex program as follows:
\begin{equation}
\min_{\mathbf{x}}\{F_1(\mathbf{x})-F_2(\mathbf{x}_{\varrho})-\langle
\nabla
F_2(\mathbf{x}_{\varrho}),\mathbf{x}-\mathbf{x}_{\varrho}\rangle\},
\end{equation}
at each iteration, where $\mathbf{x}_{\varrho}$ is the optimal
solution of the $\varrho^{\text{th}}$ iteration used for the
$(\varrho+1)^{\text{th}}$ iteration and $\nabla F_2(\mathbf{x})$ is
the gradient of $F_2(\mathbf{x})$ evaluated at
$\mathbf{x}_{\varrho}$. By the sequential convex approximation, DC subproblems are equivalently reformulated as:
\begin{subequations}\label{eq-worstcase-p-1234-1}
\begin{align}
\max_{\mathbf{p},\boldsymbol{\varphi} ,\boldsymbol{\delta}}
&~\Theta(\mathbf{p}_{\varrho},\tilde{\mathbf{p}}_{\varrho},\mathbf{s}_{\varrho},\boldsymbol{\theta}_{\varrho},\boldsymbol{\zeta}_{\varrho},\boldsymbol{\varphi}_{\varrho},{\mathbf{e}_h}_{\varrho}),\\\nonumber
\text{s.t.}~&-\left(R^{\text{E},mt}_{buq2}-R^{\text{E},mt}_{buq1}-\left\langle\nabla
R^{\text{E},t}_{bu1},
p^{mt}_{bu}-p^{mt}_{bu}(\varrho)\right\rangle
\right)\leq\\&\varphi^{mt}_{bu},\forall m\in\mathcal{M},
t\in\mathcal{F},b\in\mathcal{B},u\in\mathcal{U}_b,q\in\mathcal{Q},
\\\nonumber&(\ref{eq-P-constraint-1}),(\ref{eq-energy-2}),(\ref{eq-energy-3}),(\ref{eq-EEE-1}),(\ref{eq-EEE-2}),(\ref{eq-EEE-3}),
\end{align}
\end{subequations}
where
\begin{align}
& R^{\text{E},mt}_{buq1}=\log_2\left(\sum_{b\in
\mathcal{B}}\sum_{u\in
\mathcal{U}_b}\sum_{n\in
\mathcal{N}}\left(\eta_{nm}s^{mt}_{bu}
p^{mt}_{bu}h^{nt}_{bq}+(\sigma^{n}_q)^2\right)\right)\\
&R^{\text{E},mt}_{bu2}=\log_2\left(\sum_{\acute{b}\in
\mathcal{B}\setminus\{b\}}\sum_{\acute{u}\in
\mathcal{U}_{\acute{b}}}\sum_{n\in
\mathcal{N}}\left(\eta_{nm}s^{mt}_{\acute{b}\acute{u}}
p^{mt}_{\acute{b}\acute{u}}
h^{nt}_{\acute{b}q}+(\sigma^{n}_q)^2\right)
\right).
\end{align}

We first express $R^{\text{E},mt}_{buq}$ in a D.C. form as:
\begin{equation}\label{E}
R^{\text{E},mt}_{buq}=-(R^{\text{E},mt}_{buq2}-R^{\text{E},mt}_{buq1}).
\end{equation}
Based on (\ref{E}), the gradient $\nabla R^{\text{E},mt}_{bu1}$
with respect to $\mathbf{p}$ is given by
\begin{align}\label{eq-53}
&\nabla R^{\text{E},mt}_{bu1}=\frac{\partial
R^{\text{E},mt}_{bu1}}{\partial p^{mt}_{bu}}=
\frac{\partial R^{\text{E},mt}_{buq1}}{\partial
p^{mt}_{bu}}=\\\nonumber&\frac{s^{mt}_{bu}}{\ln2} \frac{\sum_{n\in
\mathcal{N}}\left(\eta_{nm}h^{nt}_{bq}\right)} {\sum_{b\in
\mathcal{B}}\sum_{u\in \mathcal{U}_b}\sum_{n\in
\mathcal{N}}\left(\eta_{nm}s^{mt}_{bu}p^{mt}_{bu}h^{nt}_{bq}+(\sigma^{n}_q)^2\right)}.
\end{align}

\begin{theorem}
The sequence $R^{\text{E},mt}_{buq}$ derived from the D.C.
programming algorithm is monotonically decreasing.
\end{theorem}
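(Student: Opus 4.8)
The plan is to prove the claim by the majorization argument that underlies difference-of-two-concave (D.C.) programming, tracking $R^{\text{E},mt}_{buq}$ as a function of the access power vector $\mathbf{p}$ along the iterates $\mathbf{p}_{\varrho}$ generated by the convex subproblem (\ref{eq-worstcase-p-1234-1}). First I would record the D.C.\ split (\ref{E}), $R^{\text{E},mt}_{buq}(\mathbf{p})=R^{\text{E},mt}_{buq1}(\mathbf{p})-R^{\text{E},mt}_{buq2}(\mathbf{p})$, and note that, with $\mathbf{s}$, $\mathbf{h}$ and the remaining optimization variables frozen at their current alternate-optimization values, each component is the logarithm of a nonnegative affine function of $\mathbf{p}$ (the powers enter only through $\eta_{nm}s^{mt}_{bu}p^{mt}_{bu}$) and is therefore concave in $\mathbf{p}$.

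Next I would invoke the tangent-above property of a concave function. Since $R^{\text{E},mt}_{buq1}$ is concave and differentiable, its first-order expansion about $\mathbf{p}_{\varrho}$ is a global overestimator,
\begin{equation}\label{plan-tangent}
R^{\text{E},mt}_{buq1}(\mathbf{p})\leq R^{\text{E},mt}_{buq1}(\mathbf{p}_{\varrho})+\big\langle \nabla R^{\text{E},mt}_{bu1},\,\mathbf{p}-\mathbf{p}_{\varrho}\big\rangle,
\end{equation}
with the gradient supplied by (\ref{eq-53}) and with equality at $\mathbf{p}=\mathbf{p}_{\varrho}$. Subtracting the convex term $R^{\text{E},mt}_{buq2}(\mathbf{p})$ from both sides yields the surrogate $\widehat{R}^{\text{E},mt}_{buq}(\mathbf{p};\mathbf{p}_{\varrho})$ that forms the left-hand side of the linearized secrecy constraint in (\ref{eq-worstcase-p-1234-1}). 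By construction this surrogate satisfies the two properties that drive the proof: it dominates the true rate everywhere, $R^{\text{E},mt}_{buq}(\mathbf{p})\leq\widehat{R}^{\text{E},mt}_{buq}(\mathbf{p};\mathbf{p}_{\varrho})$, and it is tight at the expansion point, $\widehat{R}^{\text{E},mt}_{buq}(\mathbf{p}_{\varrho};\mathbf{p}_{\varrho})=R^{\text{E},mt}_{buq}(\mathbf{p}_{\varrho})$.

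I would then conclude with the sandwich
\begin{align}\label{plan-chain}
R^{\text{E},mt}_{buq}(\mathbf{p}_{\varrho+1}) &\leq \widehat{R}^{\text{E},mt}_{buq}(\mathbf{p}_{\varrho+1};\mathbf{p}_{\varrho})\nonumber\\
&\leq \widehat{R}^{\text{E},mt}_{buq}(\mathbf{p}_{\varrho};\mathbf{p}_{\varrho})=R^{\text{E},mt}_{buq}(\mathbf{p}_{\varrho}),
\end{align}
whose outermost members are successive entries of the sequence, so that (\ref{plan-chain}) is precisely the asserted monotone decrease. The first inequality is domination and the last equality is tightness, both free once (\ref{plan-tangent}) holds. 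The middle inequality is the crux and the step I expect to be the main obstacle, since it asserts that the convex subproblem cannot raise the surrogate eavesdropper bound. To establish it I would first note that $\mathbf{p}_{\varrho}$, paired with $\varphi^{mt}_{bu}=R^{\text{E},mt}_{buq}(\mathbf{p}_{\varrho})$, is feasible for subproblem $\varrho+1$: by tightness it satisfies the linearized constraint, and the remaining constraints are inherited unchanged from the previous iterate. Hence the optimizer $\mathbf{p}_{\varrho+1}$ attains an objective at least as good as at $\mathbf{p}_{\varrho}$. Since $\varphi^{mt}_{bu}$ influences $\Theta$ only through the epigraph variables $\delta^{mt}_{bu}$ of (\ref{eq-EEE-2})--(\ref{eq-EEE-3}), which encode the secrecy rate $R^{\text{D},mt}_{bu}-\varphi^{mt}_{bu}$, maximizing the objective drives $\varphi^{mt}_{bu}$ down to the smallest value consistent with the active linearized constraint, giving $\varphi^{mt}_{bu}(\varrho+1)=\widehat{R}^{\text{E},mt}_{buq}(\mathbf{p}_{\varrho+1};\mathbf{p}_{\varrho})\leq R^{\text{E},mt}_{buq}(\mathbf{p}_{\varrho})$, which is exactly the middle inequality. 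The delicate point is this monotone dependence: because $\Theta$ couples $\varphi^{mt}_{bu}$ with $R^{\text{D},mt}_{bu}$ and $\mathbf{p}$, the claim that improving the objective cannot force the surrogate bound upward must be argued rather than merely asserted, and this is where I expect the real effort to go. Once (\ref{plan-chain}) is secured, iterating over $\varrho$ yields the monotone decrease, and since $R^{\text{E},mt}_{buq}\geq 0$ is bounded below, the sequence additionally converges.
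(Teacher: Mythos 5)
Your argument is structurally the same as the paper's own proof in Appendix \ref{appendix-C}: the tangent property of the concave component (the paper's (\ref{eq-001}), modulo its sign typos and its mislabeling of $R^{\text{E},mt}_{buq1}$ as ``convex'' when it is in fact concave in $\mathbf{p}$), then domination plus tightness of the surrogate, then the sandwich that yields (\ref{eq-003}). The step you single out as the crux --- that the convex subproblem cannot raise the surrogate, i.e. $\widehat{R}^{\text{E},mt}_{buq}(\mathbf{p}_{\varrho+1};\mathbf{p}_{\varrho})\leq R^{\text{E},mt}_{buq}(\mathbf{p}_{\varrho})$ --- is exactly the paper's (\ref{eq-002}), which the paper introduces with ``We can deduce that'' and never justifies. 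So your diagnosis of where the difficulty lies is correct, and on that point your write-up is more honest than the printed proof.

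However, the closure you sketch for that step does not hold. From feasibility of the pair $\bigl(\mathbf{p}_{\varrho},\,\varphi^{mt}_{bu}=R^{\text{E},mt}_{buq}(\mathbf{p}_{\varrho})\bigr)$ you may conclude only that the optimal value of subproblem (\ref{eq-worstcase-p-1234-1}) is at least $\Theta$ evaluated at the old point; you may not conclude that any individual $\varphi^{mt}_{bu}$ decreases. The objective is a ratio summed over all $b,u,m,t$: the optimizer can raise one user's eavesdropper surrogate while simultaneously raising that user's direct-rate term $R^{\text{D},mt}_{bu1}$, or while lowering the total power in the denominator, and still strictly improve $\Theta$. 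Active-constraint reasoning gives at best $\varphi^{mt}_{bu}(\varrho+1)=\max_{q}\widehat{R}^{\text{E},mt}_{buq}(\mathbf{p}_{\varrho+1};\mathbf{p}_{\varrho})$, which places no upper bound on it relative to $R^{\text{E},mt}_{buq}(\mathbf{p}_{\varrho})$. The classical DC argument you are importing works when the DC function is itself the objective being minimized --- there, optimality of $\mathbf{x}_{\varrho+1}$ against the feasible point $\mathbf{x}_{\varrho}$ immediately yields the middle inequality --- but here the DC function sits inside a constraint with a free right-hand side $\varphi^{mt}_{bu}$, so that argument is unavailable. What one can legitimately extract is the weaker per-iteration safety property $R^{\text{E},mt}_{buq}(\mathbf{p}_{\varrho+1})\leq\widehat{R}^{\text{E},mt}_{buq}(\mathbf{p}_{\varrho+1};\mathbf{p}_{\varrho})\leq\varphi^{mt}_{bu}(\varrho+1)$ (the linearized constraint is conservative, so the true secrecy constraint remains satisfied), or the full monotone chain if $\varphi^{mt}_{bu}$ is frozen at $R^{\text{E},mt}_{buq}(\mathbf{p}_{\varrho})$ during the power update. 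Without one of these amendments the claimed monotone decrease is not established --- neither by your proof nor, for the same reason, by the paper's.
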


\begin{proof}
Please see Appendix \ref{appendix-C}.
\end{proof}

\subsection{Fractional Programming}
The objective function in (\ref{eq-worstcase-p-1234}) is
non-convex. The form of (\ref{eq-worstcase-p-1234}) can be classified into the nonlinear fractional programming \cite{Dinkelbach67}. Therefore, after replacing nonconvex constraints by convex constraints using the D.C. method in the previous section, the Dinkelbach's algorithm use to solve convex fractional programming. We define the maximum objective functions
$(\chi^{\mathbf{p}})^*$ of the considered system as:
\begin{align}\label{eq-zeta}\nonumber
&(\chi^{\mathbf{p}})^*=\max_{\mathbf{p},\boldsymbol{\varphi},\boldsymbol{\delta}}\frac{\sum_{m\in
\mathcal{M}}\sum_{t\in F}\sum_{b\in \mathcal{B}}\sum_{u\in
\mathcal{U}_b}\left\{\delta^{mt}_{bu}
+R^{\text{D},mt}_{bu1}\right\}} {\sum_{t\in \mathcal{F}}\sum_{b\in
\mathcal{B}}\sum_{m\in \mathcal{M}}\sum_{u\in
\mathcal{U}_b}s^{mt}_{bu}p^{mt}_{bu}}\\&=\frac{\Xi_{\text{Num}}(\mathbf{p},\mathbf{s},\boldsymbol{\varphi},\boldsymbol{\delta})}
{\Xi_{\text{Den}}(\mathbf{p},\mathbf{s},\boldsymbol{\varphi},\boldsymbol{\delta})}.
\end{align}

We are now ready to introduce the following theorem for $\chi^{\mathbf{p}}$. 
\begin{theorem}
The maximum value of $(\chi^{\mathbf{p}})^*$ is achieved if and only if
\begin{align}\label{eq-zeta-2}
\max_{\mathbf{p},\boldsymbol{\varphi}
,\boldsymbol{\delta}}~&
\left\{{\Xi_{\text{Num}}(\mathbf{p},\mathbf{s},\boldsymbol{\varphi},\boldsymbol{\delta})}-(\chi^{\mathbf{p}})^*
{\Xi_{\text{Dem}}(\mathbf{p},\mathbf{s},\boldsymbol{\varphi},\boldsymbol{\delta})}\right\}\\\nonumber&
={\Xi_{\text{Num}}(\mathbf{p}^*,\mathbf{s},\boldsymbol{\varphi}^*,\boldsymbol{\delta}^*)}-(\chi^{\mathbf{p}})^*
{\Xi_{\text{Num}}(\mathbf{p}^*,\mathbf{s},\boldsymbol{\varphi}^*,\boldsymbol{\delta}^*)}=0.
\end{align}
For $\Xi_{\text{Num}}(\mathbf{p},\mathbf{s},\boldsymbol{\varphi},\boldsymbol{\delta})\geq0$ and
$\Xi_{\text{Dem}}(\mathbf{p},\mathbf{s},\boldsymbol{\varphi},\boldsymbol{\delta})>0$, where
\begin{equation}\label{eq-20}
\max_{\mathbf{p},\boldsymbol{\varphi}
,\boldsymbol{\delta}}~\left\{{\Xi_{\text{Num}}(\mathbf{p},\mathbf{s},\boldsymbol{\varphi},\boldsymbol{\delta})}-\chi^{\mathbf{p}}
{\Xi_{\text{Dem}}(\mathbf{p},\mathbf{s},\boldsymbol{\varphi},\boldsymbol{\delta})}\right\},
\end{equation}
is defined as a parametric program with parameter $\chi^{\mathbf{p}}$. 
\end{theorem}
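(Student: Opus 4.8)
The plan is to prove the classical Dinkelbach equivalence by introducing the auxiliary (parametric) function
\begin{equation}
F(\chi^{\mathbf{p}}) \treq \max_{\mathbf{p},\boldsymbol{\varphi},\boldsymbol{\delta}} \left\{ \Xi_{\text{Num}}(\mathbf{p},\mathbf{s},\boldsymbol{\varphi},\boldsymbol{\delta}) - \chi^{\mathbf{p}}\, \Xi_{\text{Den}}(\mathbf{p},\mathbf{s},\boldsymbol{\varphi},\boldsymbol{\delta}) \right\},
\end{equation}
which is exactly the parametric program (\ref{eq-20}). First I would record the structural properties of $F$ that make its root well defined and unique. For each fixed feasible triple $(\mathbf{p},\boldsymbol{\varphi},\boldsymbol{\delta})$ the bracketed expression is affine, hence convex, and strictly decreasing in the scalar $\chi^{\mathbf{p}}$ because $\Xi_{\text{Den}}=\sum s^{mt}_{bu}p^{mt}_{bu}>0$; taking the pointwise maximum therefore yields an $F$ that is convex, continuous, and strictly decreasing on $\mathbb{R}$. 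The assumptions $\Xi_{\text{Num}}\geq0$ and $\Xi_{\text{Den}}>0$ give $F(0)\geq0$ together with $F(\chi^{\mathbf{p}})<0$ for sufficiently large $\chi^{\mathbf{p}}$, so $F$ possesses exactly one zero.

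Next I would establish necessity. Suppose $(\chi^{\mathbf{p}})^{*}=\Xi_{\text{Num}}(\mathbf{p}^{*},\ldots)/\Xi_{\text{Den}}(\mathbf{p}^{*},\ldots)$ is the optimal ratio, attained at the maximizer $(\mathbf{p}^{*},\boldsymbol{\varphi}^{*},\boldsymbol{\delta}^{*})$. By optimality every feasible point satisfies $\Xi_{\text{Num}}/\Xi_{\text{Den}}\leq(\chi^{\mathbf{p}})^{*}$; multiplying through by $\Xi_{\text{Den}}>0$ gives $\Xi_{\text{Num}}-(\chi^{\mathbf{p}})^{*}\Xi_{\text{Den}}\leq0$ on the whole feasible set, so $F\big((\chi^{\mathbf{p}})^{*}\big)\leq0$. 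Evaluating the bracket at the maximizer yields $\Xi_{\text{Num}}(\mathbf{p}^{*},\ldots)-(\chi^{\mathbf{p}})^{*}\Xi_{\text{Den}}(\mathbf{p}^{*},\ldots)=0$, so the maximum is actually attained with value zero, i.e. $F\big((\chi^{\mathbf{p}})^{*}\big)=0$, which is precisely (\ref{eq-zeta-2}). For sufficiency I would reverse the argument: assuming $F(\chi^{\mathbf{p}})=0$ is attained at some $(\mathbf{p}^{*},\boldsymbol{\varphi}^{*},\boldsymbol{\delta}^{*})$, every feasible point obeys $\Xi_{\text{Num}}-\chi^{\mathbf{p}}\Xi_{\text{Den}}\leq0$, and dividing by $\Xi_{\text{Den}}>0$ gives $\Xi_{\text{Num}}/\Xi_{\text{Den}}\leq\chi^{\mathbf{p}}$ everywhere, while the maximizer attains equality; hence $\chi^{\mathbf{p}}$ equals the optimal ratio and coincides with $(\chi^{\mathbf{p}})^{*}$ by the uniqueness of the root of $F$.

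The part I expect to need the most care is not the algebra of the two implications, which reduces to a short sign argument, but justifying that the root of the parametric function is well defined and unique, namely the strict monotonicity and convexity of $F(\chi^{\mathbf{p}})$ together with the sign conditions $\Xi_{\text{Num}}\geq0$ and $\Xi_{\text{Den}}>0$. In particular I would verify that the denominator is strictly positive over the feasible domain $\mathbb{D}_{\mathbf{p}}$, since both the divide-through step and the strict decrease of $F$ hinge on $\Xi_{\text{Den}}>0$; any degenerate feasible point with zero total transmit power would have to be excluded or handled separately so that the ratio and the equivalence remain meaningful.
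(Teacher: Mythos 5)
Your proof is correct, but it is genuinely different from what the paper does: the paper offers no proof at all for this theorem and simply defers to the cited references (Dinkelbach 1967 and Schaible 1976). What you have written is the classical self-contained Dinkelbach argument: define the parametric value function $F(\chi^{\mathbf{p}})=\max_{\mathbf{p},\boldsymbol{\varphi},\boldsymbol{\delta}}\{\Xi_{\text{Num}}-\chi^{\mathbf{p}}\Xi_{\text{Den}}\}$, establish that it is convex, continuous and strictly decreasing (using $\Xi_{\text{Den}}>0$ and attainment of the maximum), deduce existence and uniqueness of its root, and then run the two sign-based implications for necessity and sufficiency. The citation buys the paper brevity, since this is a textbook result in fractional programming; your route buys two things the paper's treatment lacks. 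First, it surfaces the hypothesis that actually requires checking in this application, namely that $\Xi_{\text{Den}}=\sum_{t,b,m,u}s^{mt}_{bu}p^{mt}_{bu}$ is strictly positive on the feasible set --- a nontrivial point here because the power constraints \eqref{eq-P-constraint-1} admit $\mathbf{p}=\mathbf{0}$, so the degenerate zero-power point must be excluded for the ratio, the divide-through step, and the strict decrease of $F$ to be meaningful; you correctly flag this. Second, your argument implicitly corrects a typo in the theorem statement itself: in \eqref{eq-zeta-2} the second term should be $(\chi^{\mathbf{p}})^*\Xi_{\text{Den}}(\mathbf{p}^*,\mathbf{s},\boldsymbol{\varphi}^*,\boldsymbol{\delta}^*)$ rather than $(\chi^{\mathbf{p}})^*\Xi_{\text{Num}}(\mathbf{p}^*,\mathbf{s},\boldsymbol{\varphi}^*,\boldsymbol{\delta}^*)$, as is evident from the role the denominator plays in both directions of your proof. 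One small remark: in the sufficiency direction, invoking uniqueness of the root of $F$ is not actually needed --- once every feasible point satisfies $\Xi_{\text{Num}}/\Xi_{\text{Den}}\leq\chi^{\mathbf{p}}$ with equality at the maximizer, $\chi^{\mathbf{p}}$ is by definition the optimal ratio $(\chi^{\mathbf{p}})^*$ --- but the extra step is harmless.
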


\begin{proof}
Please refer to \cite{Dinkelbach67,Schaible76}.
\end{proof}

By the Dinkelbach's method [23] with a initial value $\chi^{\mathbf{p}}_0$ of
$\chi^{\mathbf{p}}$, (\ref{eq-20}) can be solved iteratively by solving the following problem:
\begin{equation}\label{eq-21}
\max_{\mathbf{p},\boldsymbol{\varphi}
,\boldsymbol{\delta}}~{\Xi_{\text{Num}}(\mathbf{p},\mathbf{s},\boldsymbol{\varphi},\boldsymbol{\delta})}-\chi^{\mathbf{p}}_{\varrho}
{\Xi_{\text{Dem}}(\mathbf{p},\mathbf{s},\boldsymbol{\varphi},\boldsymbol{\delta})},
\end{equation}
with a given $\chi^{\mathbf{p}}_{\varrho}$ at the $\varrho^{\text{th}}$
iteration, where $\varrho$ is the iteration index.
$\chi^{\mathbf{p}}_{\varrho}$ can be explained as the secure EE obtained at the
previous iteration. In (\ref{eq-21}), the maximization problem is
equivalent to
\begin{equation}\label{eq-22}
\min_{\mathbf{p},\boldsymbol{\varphi}
,\boldsymbol{\delta}}~{\chi^{\mathbf{p}}_{\varrho}
{\Xi_{\text{Dem}}(\mathbf{p},\mathbf{s},\boldsymbol{\varphi},\boldsymbol{\delta})-\Xi_{\text{Num}}(\mathbf{p},\mathbf{s},\boldsymbol{\varphi},\boldsymbol{\delta})}}.
\end{equation}

Let
$\mathbf{p}(\chi^{\mathbf{p}}_{\varrho}), \boldsymbol{\varphi}(\chi^{\mathbf{p}}_{\varrho})$ and $\boldsymbol{\delta}(\chi^{\mathbf{p}}_{\varrho})$ denote the solution of (\ref{eq-23})
for a given $\chi^{\mathbf{p}}_{\varrho}$. After each iteration,
$\chi^{\mathbf{p}}_{\varrho}$ should be updated by
\begin{equation}\label{eq-23}
\chi^{\mathbf{p}}_{\varrho+1}=\frac
{\Xi_{\text{Dem}}(\mathbf{p}(\chi^{\mathbf{p}}_{\varrho}),\boldsymbol{\varphi}(\chi^{\mathbf{p}}_{\varrho}),\boldsymbol{\delta}(\chi^{\mathbf{p}}_{\varrho}),\mathbf{s})}
{\Xi_{\text{Num}}(\mathbf{p}(\chi^{\mathbf{p}}_{\varrho}),\boldsymbol{\varphi}(\chi^{\mathbf{p}}_{\varrho}),\boldsymbol{\delta}(\chi^{\mathbf{p}}_{\varrho}),\mathbf{s})}.
\end{equation}

The iteration process will be stopped when (\ref{eq-zeta-2}) is
satisfied. In practice, we define the terminated condition of the
iterative process as:
\begin{align}
&\Big|\chi_{\varrho}\Xi_{\text{Dem}}(\mathbf{p}(\chi^{\mathbf{p}}_{\varrho}),\boldsymbol{\varphi}(\chi^{\mathbf{p}}_{\varrho}),\boldsymbol{\delta}(\chi^{\mathbf{p}}_{\varrho}),\mathbf{s})-\\\nonumber&
\Xi_{\text{Num}}(\mathbf{p}(\chi^{\mathbf{p}}_{\varrho}),\boldsymbol{\varphi}(\chi^{\mathbf{p}}_{\varrho}),\boldsymbol{\delta}(\chi^{\mathbf{p}}_{\varrho})),\mathbf{s})\Big|\leq\epsilon_3,
\end{align}
with a small convergence tolerance $\epsilon_3>0$. The algorithm
of fractional programming is clarified in Algorithm 2, where
$\Psi_3$ is the maximum allowed number of iterations considering
the computational time. We use the fractional programming Dinkelbach's algorithm for the convexified problem (\ref{eq-worstcase-p-1234-1}).

\begin{theorem}
If problems (\ref{eq-worstcase-p-1234-11}) are solvable, the sequence
$\{\chi_{\varrho}\}$ obtained by Algorithm 1 has the following
properties: 1) $\chi^{\mathbf{p}}_{\varrho+1}>\chi^{\mathbf{p}}_{\varrho}$; 2)
$\lim_{\varrho\rightarrow\infty}\chi^{\mathbf{p}}_{\varrho}=(\chi^{\mathbf{p}})^*$.
\end{theorem}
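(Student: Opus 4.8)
The plan is to reproduce the classical Dinkelbach argument, built around the parametric value function $F(\chi)=\max_{\mathbf{p},\boldsymbol{\varphi},\boldsymbol{\delta}}\{\Xi_{\text{Num}}(\mathbf{p},\mathbf{s},\boldsymbol{\varphi},\boldsymbol{\delta})-\chi\,\Xi_{\text{Dem}}(\mathbf{p},\mathbf{s},\boldsymbol{\varphi},\boldsymbol{\delta})\}$ and the root characterization already supplied by the preceding theorem, namely that $F(\chi)=0$ holds if and only if $\chi=(\chi^{\mathbf{p}})^*$. First I would fix notation: let $(\mathbf{p}_\varrho,\boldsymbol{\varphi}_\varrho,\boldsymbol{\delta}_\varrho)$ be the maximizer returned by solving (\ref{eq-21}) at iteration $\varrho$, so that $F(\chi^{\mathbf{p}}_\varrho)=\Xi_{\text{Num}}(\mathbf{p}_\varrho)-\chi^{\mathbf{p}}_\varrho\,\Xi_{\text{Dem}}(\mathbf{p}_\varrho)$, and recall that the update (\ref{eq-23}) sets $\chi^{\mathbf{p}}_{\varrho+1}=\Xi_{\text{Num}}(\mathbf{p}_\varrho)/\Xi_{\text{Dem}}(\mathbf{p}_\varrho)$.

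For property~1, the key observation is that $F(\chi^{\mathbf{p}}_\varrho)>0$ strictly before termination. Indeed, whenever $\chi^{\mathbf{p}}_\varrho<(\chi^{\mathbf{p}})^*$, evaluating the inner objective at the global maximizer $\mathbf{p}^*$ yields $F(\chi^{\mathbf{p}}_\varrho)\ge\Xi_{\text{Num}}(\mathbf{p}^*)-\chi^{\mathbf{p}}_\varrho\,\Xi_{\text{Dem}}(\mathbf{p}^*)=((\chi^{\mathbf{p}})^*-\chi^{\mathbf{p}}_\varrho)\,\Xi_{\text{Dem}}(\mathbf{p}^*)>0$, using $\Xi_{\text{Dem}}>0$. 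Since then $\Xi_{\text{Num}}(\mathbf{p}_\varrho)-\chi^{\mathbf{p}}_\varrho\,\Xi_{\text{Dem}}(\mathbf{p}_\varrho)>0$ and $\Xi_{\text{Dem}}(\mathbf{p}_\varrho)>0$, dividing through gives $\Xi_{\text{Num}}(\mathbf{p}_\varrho)/\Xi_{\text{Dem}}(\mathbf{p}_\varrho)>\chi^{\mathbf{p}}_\varrho$, which is precisely $\chi^{\mathbf{p}}_{\varrho+1}>\chi^{\mathbf{p}}_\varrho$.

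For property~2, I would first bound the sequence above by $(\chi^{\mathbf{p}})^*$: each $\chi^{\mathbf{p}}_{\varrho+1}$ equals the ratio $\Xi_{\text{Num}}/\Xi_{\text{Dem}}$ evaluated at a feasible point, hence cannot exceed the maximal ratio $(\chi^{\mathbf{p}})^*$ defined in (\ref{eq-zeta}). A monotonically increasing sequence bounded above converges, say $\chi^{\mathbf{p}}_\varrho\to\bar\chi\le(\chi^{\mathbf{p}})^*$. To identify $\bar\chi$ with $(\chi^{\mathbf{p}})^*$, I would rewrite $F(\chi^{\mathbf{p}}_\varrho)=(\chi^{\mathbf{p}}_{\varrho+1}-\chi^{\mathbf{p}}_\varrho)\,\Xi_{\text{Dem}}(\mathbf{p}_\varrho)$ and pass to the limit; since $\chi^{\mathbf{p}}_{\varrho+1}-\chi^{\mathbf{p}}_\varrho\to0$ while $\Xi_{\text{Dem}}(\mathbf{p}_\varrho)$ stays bounded, this forces $F(\bar\chi)=0$, and the uniqueness of the root established in the preceding theorem (condition (\ref{eq-zeta-2})) gives $\bar\chi=(\chi^{\mathbf{p}})^*$.

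The main obstacle will be the limit step. It relies on $F$ being continuous and on $\Xi_{\text{Dem}}(\mathbf{p}_\varrho)$ remaining in a bounded interval that stays away from $0$; I would secure both by noting that $F$ is a pointwise supremum of functions affine in $\chi$, hence convex and finite, therefore continuous and strictly decreasing, and that the power budget and energy-harvesting constraints render the feasible set of (\ref{eq-21}) compact, so $\Xi_{\text{Dem}}(\mathbf{p}_\varrho)$ is bounded. With continuity of $F$ and the single-root property of (\ref{eq-zeta-2}) in hand, the limit $\bar\chi=(\chi^{\mathbf{p}})^*$ is pinned down and both claims follow.
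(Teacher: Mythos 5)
Your proof is correct and is precisely the argument the paper relies on: the paper gives no proof of its own for this theorem, deferring entirely to the citation of Dinkelbach's classical work \cite{Dinkelbach67}, and your reconstruction --- strict positivity of the parametric value function $F(\chi^{\mathbf{p}}_{\varrho})$ before termination yielding $\chi^{\mathbf{p}}_{\varrho+1}>\chi^{\mathbf{p}}_{\varrho}$, then monotone-bounded convergence combined with continuity of $F$ and the unique-root characterization from (\ref{eq-zeta-2}) yielding $\lim_{\varrho\rightarrow\infty}\chi^{\mathbf{p}}_{\varrho}=(\chi^{\mathbf{p}})^*$ --- is exactly that standard proof. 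Two small remarks: you silently (and correctly) use the update $\chi^{\mathbf{p}}_{\varrho+1}=\Xi_{\text{Num}}(\mathbf{p}_{\varrho})/\Xi_{\text{Dem}}(\mathbf{p}_{\varrho})$, whereas the paper's (\ref{eq-23}) writes the ratio inverted (evidently a typo); and strictness in property 1 at $\varrho=0$ additionally requires the initial value to satisfy $\chi^{\mathbf{p}}_{0}<(\chi^{\mathbf{p}})^*$ (e.g., initializing with the ratio at any feasible point), which is the standard initialization assumption in Dinkelbach's method.
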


\begin{proof}
Please refer to \cite{Dinkelbach67}.
\end{proof}

Based on the fractional programming, subproblems
(\ref{eq-worstcase-p-1234-11}) are associated with a parametric program problem stated as follows:
\begin{subequations}\label{eq-worstcase-p-1234-11}
\begin{align}
\max_{\mathbf{p},\boldsymbol{\varphi} ,\boldsymbol{\delta}}
&~\chi^{\mathbf{p}}\sum_{t\in \mathcal{F}}\sum_{b\in \mathcal{B}}\sum_{m\in
\mathcal{M}}\sum_{u\in
\mathcal{U}_b}s^{mt}_{bu}p^{mt}_{bu}-\\\nonumber&\sum_{m\in
\mathcal{M}}\sum_{t\in F}\sum_{b\in \mathcal{B}}\sum_{u\in
\mathcal{U}_b}\left\{\delta^{mt}_{bu}
+R^{\text{D},mt}_{bu1}\right\},\\\nonumber
\text{s.t.}~&-\left(R^{\text{E},mt}_{buq2}-R^{\text{E},mt}_{buq1}-\left\langle\nabla
R^{\text{E},t}_{bu1},
p^{mt}_{bu}-p^{mt}_{bu}(\varrho)\right\rangle
\right)\leq\varphi^{mt}_{bu},\\&\forall m\in\mathcal{M},
t\in\mathcal{F},b\in\mathcal{B},u\in\mathcal{U}_b,q\in\mathcal{Q},
\\\nonumber&(\ref{eq-P-constraint-1}),(\ref{eq-energy-2}),(\ref{eq-energy-3}),
(\ref{eq-EEE-1}),(\ref{eq-EEE-2}),(\ref{eq-EEE-3}).
\end{align}
\end{subequations}

We propose an iterative algorithm (known as the Dinkelbach method
\cite{Dinkelbach67}) for solving (\ref{eq-worstcase-p-1234-11}) with an equivalent objective
function. The proposed algorithm to obtain power allocation policy $\textbf{p}$ is summarized in Table. \ref{Fractional Programming Algorithm}. The
convergence to the appropriate energy efficiency is guaranteed. Note that similar algorithm can be used to obtain code allocation policy $\textbf{s}$.

\begin{table*}[h!]\caption{Fractional Programming Algorithm}\label{Fractional Programming Algorithm}
  \centering
  \begin{tabular}{lcc}
    \toprule
    \textbf{Algorithm 3}: Fractional Programming Algorithm\\
    \midrule
    \textbf{Step1}: Initialize the maximum number of iterations $\Psi_3$ and the
maximum tolerance $\epsilon_3$;\\
\textbf{Step2}: Choose an initial value $\chi^{\mathbf{p}}_0$ and set iteration index $\varrho=0$;\\
\textbf{Repeat}\\
\textbf{Step3}: Solve  problem 
(\ref{eq-worstcase-p-1234-11}) for a given $\chi^{\mathbf{p}}_{\varrho}$ and
obtain power allocation policy $\mathbf{p}(\chi^{\mathbf{p}}_{\varrho})$
(Convex programming);\\
\textbf{Step5}: Update $\chi^{\mathbf{p}}_{\varrho}$ by (\ref{eq-23}) to obtain
$\chi^{\mathbf{p}}_{\varrho+1}$;\\
\textbf{Step6}:$\varrho=\varrho+1$;\\
\textbf{Step7}: If
$|\chi^{\mathbf{p}}_{\varrho}\Xi_{\text{Den}}(\mathbf{p}(\chi^{\mathbf{p}}_{\varrho}),\mathbf{s})-
\Xi_{\text{Num}}(\mathbf{p}(\chi^{\mathbf{p}}_{\varrho}),\mathbf{s})|<\epsilon_3$
or $\varrho>\Psi_3$ goto Step7,
\textbf{else} goto Step3;\\
\textbf{Step8}: Return $\mathbf{p}^*=\mathbf{p}(\chi^{\mathbf{p}}_{\varrho-1})$, $(\chi^{\mathbf{p}})^*=\chi^{\mathbf{p}}_{\varrho}$.\\
   \textbf{End}\\
    \bottomrule
  \end{tabular}
\end{table*}

To solve the primary optimization problem, the main optimization
problem is decomposed into several subproblems, with each
subproblem being in a hierarchical order of the main problem. Depending on different methods to solve each subproblem, the computational complexity of the proposed algorithm is analyzed in Section \ref{computationalanalysis}.
	
\section{Analysis of Computational Complexity of Proposed Algorithm}\label{computationalanalysis}
	To solve the primary optimization problem, the main optimization
	problem is decomposed into several subproblems, with each
	subproblem being in a hierarchical order of the main problem. The fast gradient algorithm
	can be used to solve the inner convex subproblems
	\cite{richter2012computational}. Then, the convergence of fast
	gradient algorithm can be written as follows
	\cite{richter2012computational}:
	\begin{equation}
	\varrho^{\iota}_{\tau}=\mathcal{O}(1)
	\min\left\{\sqrt{\frac{l_{\iota}}{\xi_{\iota}}}\ln\left(\frac{l_{\iota}}{\tau}\right),
	\sqrt{\frac{l_{\iota}}{\xi_{\iota}}}\right\},\iota=1,2,
	\end{equation}
	where $\tau$ is convergence tolerance. $l_1$ and $l_2$ are defined
	as the Lipschitz constants by which the Lipschitz conditions are
	satisfied with the gradients $\nabla
	R^{\text{E},t}_{buq,1}(\mathbf{p},\mathbf{s})$ in subproblem
	(\ref{AS-2}) and (\ref{AS-3}), respectively. Moreover, $\xi_1$ and
	$\xi_2$ denote the convexity parameters to satisfy strong
	convexity of $f_1$ and $f_2$, respectively. In other words, the
	fast gradient method is used to solve optimization problem
	(\ref{AS-2}) and (\ref{AS-3}) in $\varrho^{1}_{\tau}$ and
	$\varrho^{2}_{\tau}$ iterations, respectively. The iteration
	numbers $\varrho_1, \varrho_2$, and $\varrho_3$ are corresponding
	to the convergence tolerance $\epsilon_1, \epsilon_2$, and
	$\epsilon_3$ related to subalgorithms. Then, the overall
	computational complexity associated with the proposed algorithm is
	dominated by
	$\varrho_1\varrho_3(\varrho^1_2\varrho^{1}_{\tau}+\varrho^2_2\varrho^{2}_{\tau})$,
	where $\varrho^1_2$ and $\varrho^2_2$ are the iteration numbers of
	DC programming algorithm. The computational complexity of the proposed solution of the content placement problem is equal to $\mathcal{O}(I_{\theta}B^{3.5}K^{3.5})$ \cite{ben2001lectures} where $I_{\theta}$ is the
	expected iteration numbers. The computational complexity of the our proposed MFCD problem is equal to $\mathcal{O}(I_{\beta}F^{3.5}K^{3.5})$ \cite{ben2001lectures} where $I_{\beta}$ is the
	expected iteration numbers. The computational complexity of the proposed backhaul power and subcarrier allocation problems are equal to $\mathcal{O}(\hat{I}_{\tilde{p}}NFB
	\log_2(1/\epsilon_{\tilde{p}}))$ and $\mathcal{O}(\hat{I}_{\zeta}NFB
	\log_2(1/\epsilon_{\zeta}))$ where $\epsilon_{\tilde{p}}$, $\hat{I}_{\tilde{p}}$, 
	$\epsilon_{\zeta}$ and $\hat{I}_{\zeta}$ are the maximum error tolerance and expected iteration numbers, respectively.

By the proposed algorithm, the main problem, which is NP-hard, can be solved in polynomial time. Therefore, the complexity of the proposed algorithm compared to the original problem is far less and manageable. To manage the complexity, Graphics Processing Unit (GPU) can be utilized. By exploiting GPU instead of central processing unit (CPU), new processing methods to accelerate the processing time can be used. In \cite{naderazmi2018arxiv}, a new framework to accelerate the iterative-based resource allocation by using ASM and SCA has been devised. By using GPU-based resource allocation, the processing time speed-up of about 1500 times compared to CPU-based methods can be achieved. Due to the space limitation, we omit the study of GPU-based version of the proposed algorithms and leave it as future work.

\section{Simulation Results}\label{simulationsresults}
For simulations, we consider a multi-cell downlink SCMA system
where $U$ users are randomly distributed in an area of circle
with the radius of 1 km for each BS as the center. The number of
users in circle area of $b^{\text{th}}$ BS is set to $U_b=4,\forall b$, and
the total number of subcarriers and codebooks are set to 8 and 28,
respectively. The bandwidth of each subcarrier
is 180 kHz \cite{etsi2011136}. The channels between the MBS and its users and SBS and its users are generated with a normalized Rayleigh fading
component and a distance-dependent path loss in urban and suburban areas, modeled as
$PL(dB)=128.1+37.6\log_{10}(d)+X$ and $PL(dB)=38+30\log_{10}(d)+X$, respectively \cite{etsi2011136}, where $d$ is the distance from user to BS in kilometers and $X $ is 8 dB log-normal shadowing. We set the frame duration to $T=0.01$ s \cite{castiglione2014energy,zhang2015delay}. The noise power, $(\sigma^n_u)^2=(\sigma^n_b)^2=\sigma^2,\forall u,b,n$ is set to $-125$ dBm. We set $D=2$ and $\eta_{nm}=0.5,\forall n,m$ for SCMA
\cite{nikopour2013sparse}. We set the amount of harvested energy
per arrival to $\rho^t_b=\rho=0.8, \forall t, b$ J,  $\Gamma_b=\Gamma=0.1,\forall b$ and users request contents by normal random generator. In the most popular caching case, the most popular contents is cached at each BS until its storage is full. In this case, the content popularity is modeled as the Zipf distribution with Zipf parameter equals to 0.8. Simulation results are obtained by averaging over 1000 simulation runs.

\subsection{Effect of Maximum Allowable Backhaul Transmission Power}
In this part, we obtain the backhaul rate for different values of
backhaul transmission power with different values of $\alpha$. The simulation results are compared for different caching scenarios such as no caching, random caching, most popular caching and the proposed caching methods. In no caching case, no contents are stored by any BS. Hence, all the
requested contents are served by the core network over the backhaul links \cite{park2016joint,stephen2016green,
	hsu2016resource}. In the random caching strategy, the contents  are randomly cached by BSs until storage of BSs is full. Content popularity does not matter in this strategy. In the most popular caching strategy, each BS caches the most popular
contents until its storage is full \cite{stephen2016green,park2016joint}. The results are reported in Fig.
\ref{Fig_R_backhual_vs_p_diff_alpha}. As can be seen, for a fixed transmit power, when $\alpha$ is increased, the resulting backhaul rate increases. As can be seen from this figure, utilizing the caching strategies can reduce backhaul traffic compared
to the no caching scheme. Our proposed caching strategy has nearly 43\%, 23.4\% and 18.5\% performance gain in terms
	of backhaul rate reduction compared to the no caching scheme for different values of $\alpha=$1, 2, and 3, respectively. It is also notable that the most popular caching strategy causes
more reduction in the backhaul traffic, compared to the random caching scheme. However, when all the caching placement are done jointly with the allocation of other network resources, the network performance improves dramatically. This improvement is due to the fact that content placement is done according to network conditions and resources. Besides, as shown in Fig. \ref{Fig_R_backhual_vs_p_diff_alpha}, our caching scheme reduces the total backhaul rate close to almost 11\% compared to the most popular caching strategy.
\begin{figure}[h]
	\begin{center}
		\subfigure[]{\label{Fig_R_backhual_vs_p_diff_alpha}
			\includegraphics[width=3.3 in]
			{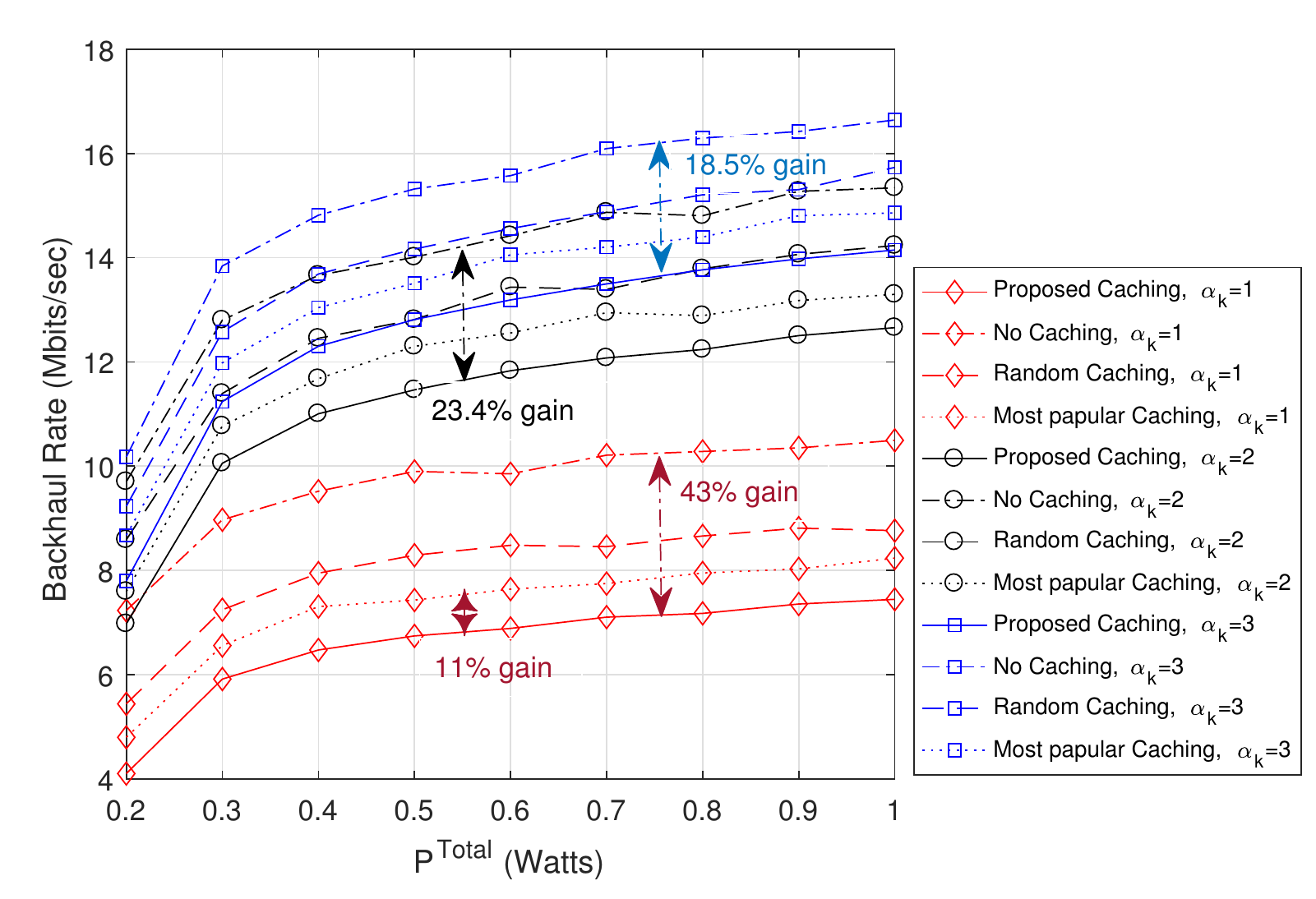}}
		\subfigure[]{\label{Fig_Energy_efficiency_vs_E_b_diff_U}
			\includegraphics[width=3 in]
			{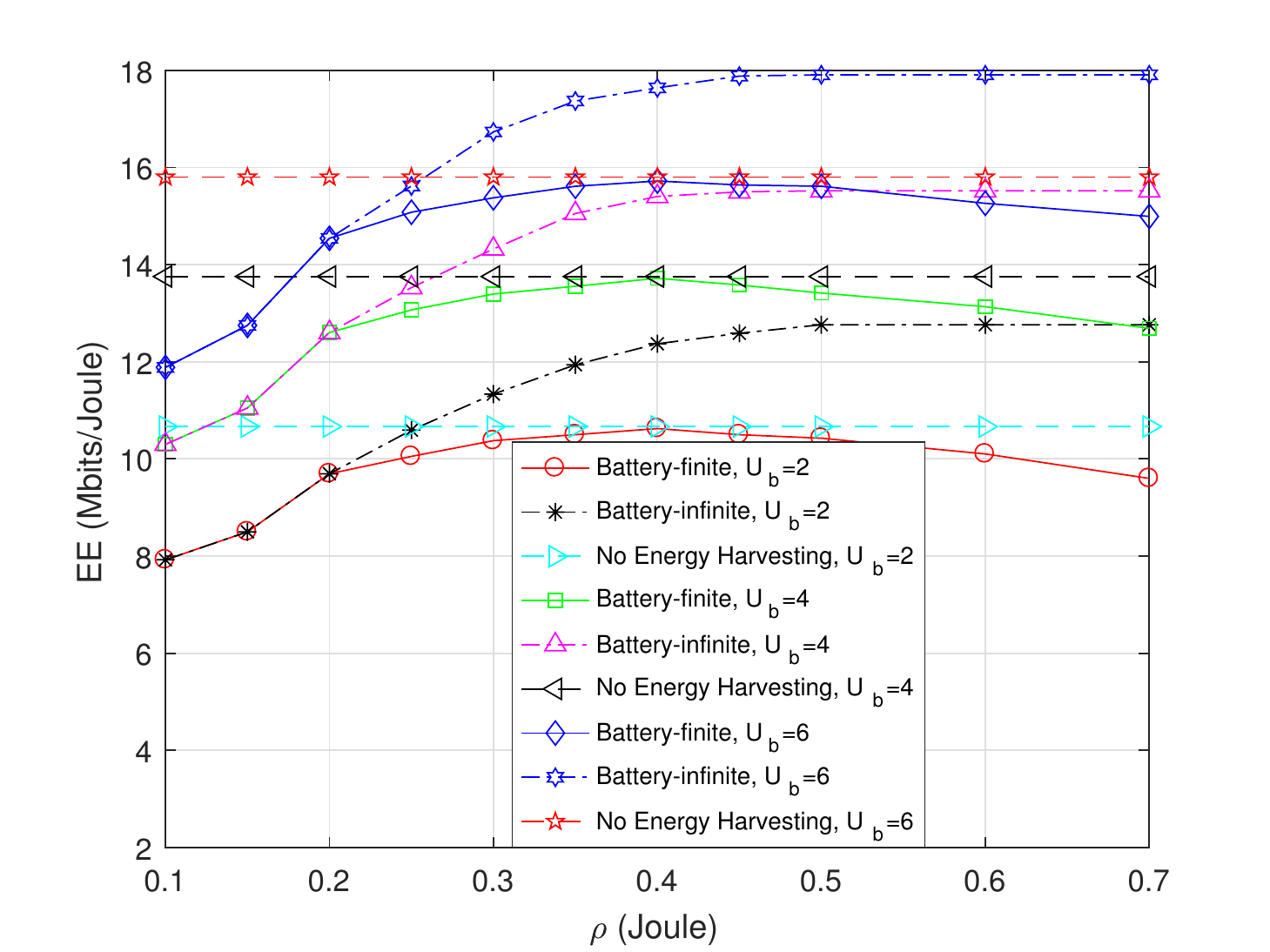}}
		\caption{(a) Backhaul rate, $\tilde{R}$, \emph{vs}.
			the maximum backhaul transmit power constraint, $\tilde{P}^{\text{Total}}$ for the SFCD scenario.
			System parameters are: $B=2, M=28, F=2, U=4, N=8, Q=1, K=6, D=2, T=0.01$ s, $\forall k$, $V_b=10$ Mbits,  $\forall b$ , $E^{\text{max}}_b=2$ Joule, $\forall b$, $\rho=0.5$ Joule, $\varepsilon_{h}=0.5$. (b) Energy efficiency, $EE$, \emph{vs}.
			the harvested energy per arrival, $\rho$ for the SFCD scenario. System parameters are: $B=2, M=28, F=2, U=2, N=8, Q=2, K=6, D=2, T=0.01$ s,
			$\alpha_k=1$ Mbits, $\forall k$, $V_b=10$ Mbits, $\forall b$, $\tilde{P}^{\text{Total}}=0.1$ Watts, $E^{\text{max}}_b=2$ Joule, $\forall b$, $\varepsilon_{h}=0.5$.}
	\end{center}
\end{figure}

\subsection{Effect of Energy Harvesting}
Fig. \ref{Fig_Energy_efficiency_vs_E_b_diff_U} shows the EE as a function of harvested energy per arrival for the SFCD scenario. We compare different EH strategy in terms of EE. In general, by increasing the EH value, the EE is also increased. For larger number of users, the EE is increased. In other words, for the small number of users, there is sufficient power resources, therefore by increasing users, the EE is also increased as shown in Fig. \ref{Fig_Energy_efficiency_vs_E_b_diff_U}. However, for too more users, the power
resource will be exhausted and thus some users can not access to network. Even so, due to multiuser diversity, the EE will still increase. For limited battery, due to overflow conditions (\ref{eq-energy-3}), the stored energy must be used such that there is enough capacity in the battery for newly arrived energy. In this regard, increasing the value of $\rho$ will increase the energy efficiency at first, but with further increasing $\rho$, the energy efficiency decreases. This is because, from the energy efficiency point of view, the energy consumption would be limited to the amount which maximizes the bit-per-joule quantity.   However, for unlimited battery, with increasing $\rho$, the energy efficiency increases at first, and by further increasing $\rho$, the energy efficiency becomes constant since no more energy would be consumed as all the arriving energy could be stored in the battery.

\subsection{Effect of File Spliting}
Fig. \ref{fig_EE_vs_Packet_Spliting} shows EE as a function of the harvested energy per arrival, $\rho$ for the SFCD and MFCD scenarios As can be seen from Fig. \ref{fig_EE_vs_Packet_Spliting}, the MFCD scheme outperforms the  SFCD scenario. In the EE communication networks, due to random energy arrivals, there may be not enough energy to transmit file that has big size in the SFCD scheme. In contrast, in the MFCD schemes, file is splitted into the several small size files which can be transmitted in the suitable frames to increase EE. In the uniform file splitting, the file is uniformly splitted into several smaller files with the same size. This scheme has better performance than the SFCD scheme, However, we can improve the network performance by using the our proposed method. In the our proposed MFCD scheme, the best size of each splitted file is obtained to enhance the network performance. This figure  also shows that by reducing the size of file, the distance between the graphs for the three scenarios decreases. As seen, the MFCD-proposed file splitting and MFCD-uniform file splitting have closed to almost 9.4\% and 6\% performance gain in terms of EE compared to SFCD scheme, respectively.

\begin{figure}[h]
	\begin{center}
		\subfigure[]{\label{fig_EE_vs_Packet_Spliting}\includegraphics[width=3.in]
			{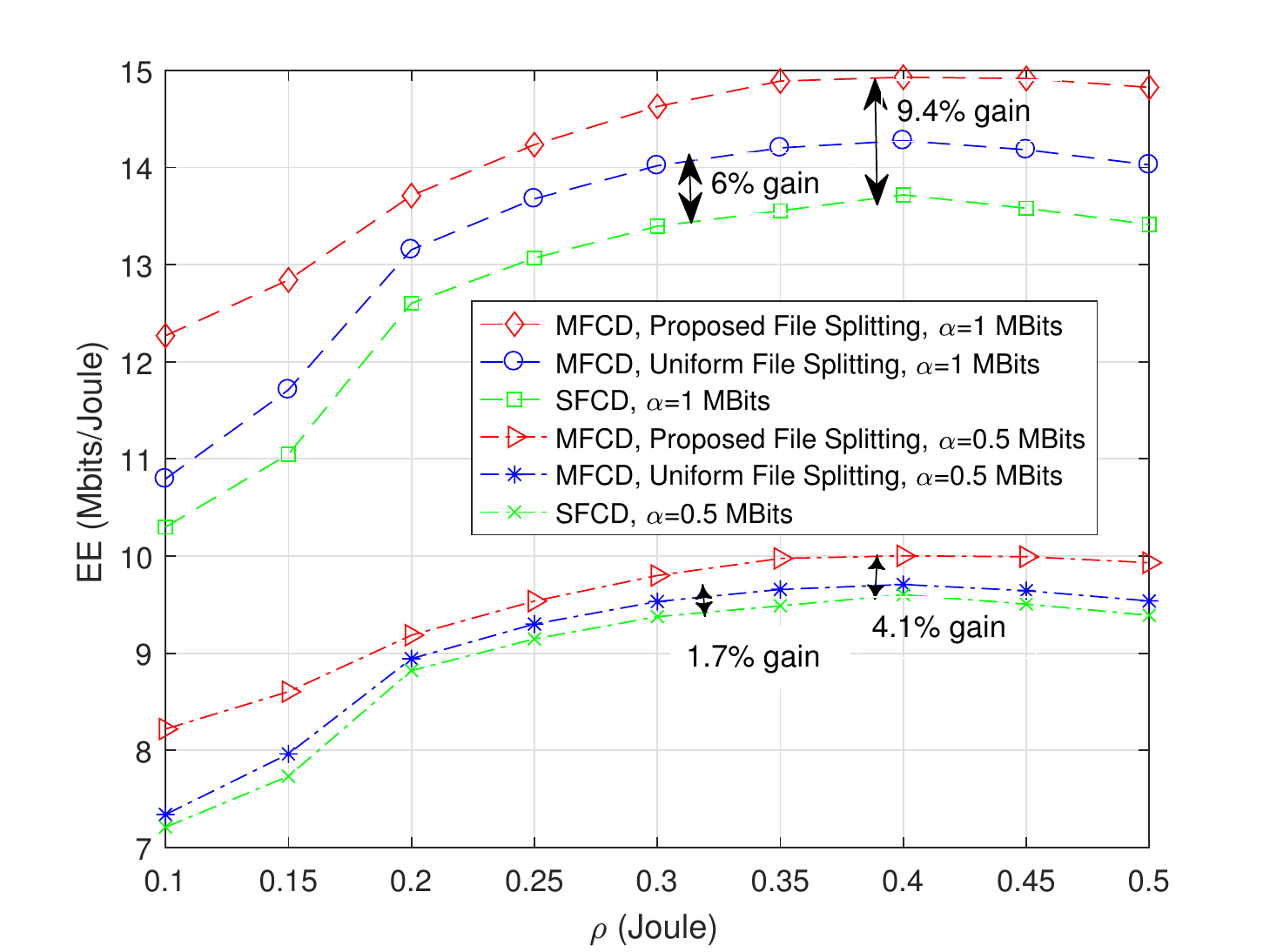}}
		\subfigure[]{\label{Fig_Outage_probability_vs_F_diff_alpha}\includegraphics[width=3.in]
			{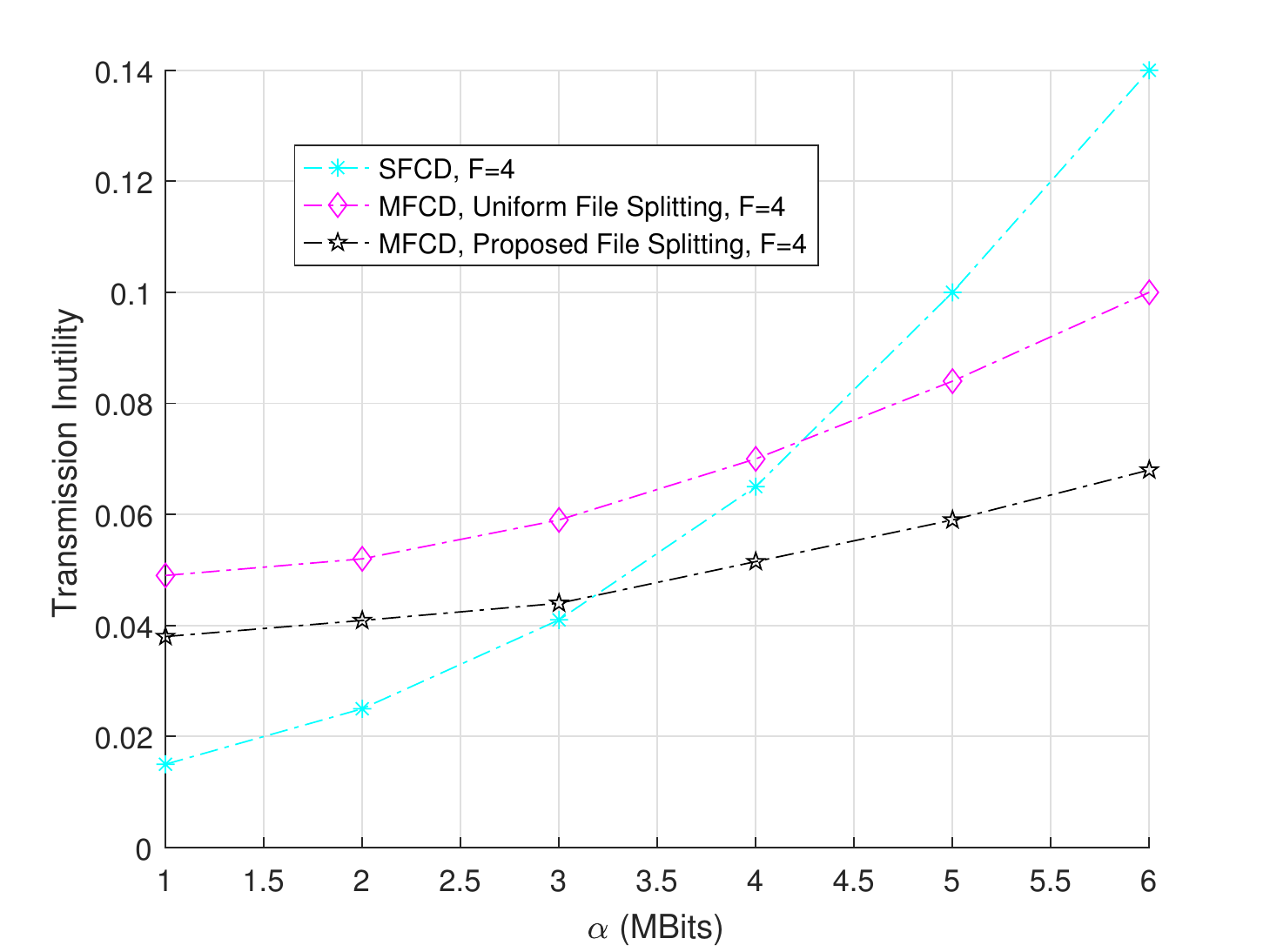}}
		\caption{(a) Energy efficiency, $EE$, \emph{vs}.
			the harvested energy per arrival, $\rho$.
			System parameters are: $B=2, M=28, F=2, U=2, N=8, Q=2, K=6, D=2, T=0.01$ s, $\forall k$, $V_b=10$ Mbits, $\forall b$, $\tilde{P}^{\text{Total}}=0.1$ Watt, $E^{\text{max}}_b=2$ Joule, $\forall b$, $\varepsilon_{h}=0.5$. (b) Outage probability, \emph{vs}.
			the number of frames, $F$.
			System parameters are: $B=2, M=28, U=4, N=8, Q=2, K=6, D=2, T=0.01$ s, $V_b=10$ Mbits, $\forall b$, $\tilde{P}^{\text{Total}}=1$ Watt, $E^{\text{max}}_b=2$ Joule, $\forall b$, $\rho=0.1$, Joule, $\varepsilon_{h}=0.5$.}
	\end{center}
\end{figure}

\subsection{Transmission Inutility}
In this section, we investigate the transmission inutility for the SFCD and MFCD schemes. The transmission inutility is defined by multiplying the outage probability in the transmission delay. The outage probability is defined as probability that there is not enough battery to send content files and the transmission delay is defined as number of frames to send files. Fig. \ref{Fig_Outage_probability_vs_F_diff_alpha}  demonstrates the transmission inutilities of our proposed schemes for different content file size. As can be seen, by increasing the size of file, the transmission inutility is increased for both schemes. This is due to the fact that there may be not enough harvested energy to send file, and the energy deficiency probability can be increased. Therefore, the outage probability approaches to one in sufficiently big size of files. This deterioration in the MFCD schemes are less than the SFCD scheme. Because in the MFCD schemes, the deficiency probability of energy can be reduced by dividing the content file into several parts and sending each part in different frames. In the proposed splitting scheme, we find the best fractional of content file for each frame which reduces the outage probability more than before. As shown in Fig. \ref{Fig_Outage_probability_vs_F_diff_alpha}, for larger content file sizes, the MFCD scheme has a higher efficiency in reducing the outage probability. As can be seen, for the size of content files less than 3 MBits, the SFCD scheme is better, while for the size of large files, the MFCD scheme is better.

\subsection{Effect of Channel Uncertainity}
Fig. \ref{Fig3} shows the access secrecy rate versus channel uncertainty for the SFCD scenario. We see
that at bigger channel uncertainty, the secrecy access rate clearly has low value. This is due to the fact that when the uncertainty increases, for the worst case scenario, we must guarantee the security for the  worst (biggest) channel value of eavesdroppers which leads to low values of secrecy rate. As can be seen, as the number of eavesdroppers increases, the secrecy access rate decreases due to the multiuser diversity gain for eavesdroppers. 

\begin{figure}[h]
	\begin{center}
		\subfigure[]{\label{Fig3}\includegraphics[width=3.in]
			{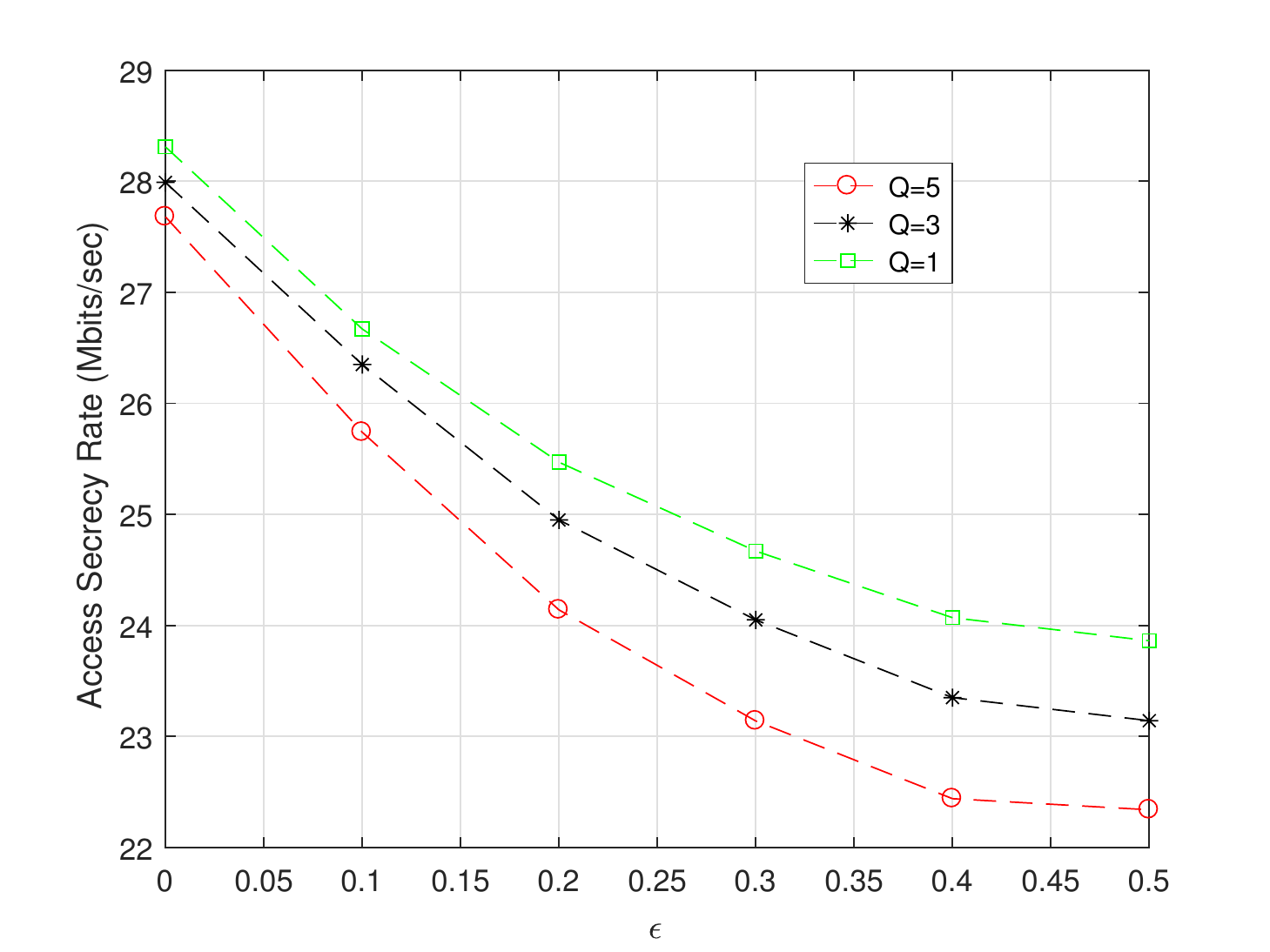}}
		\subfigure[]{\label{Fig_compare_joint_disjoint}\includegraphics[width=3.in]
			{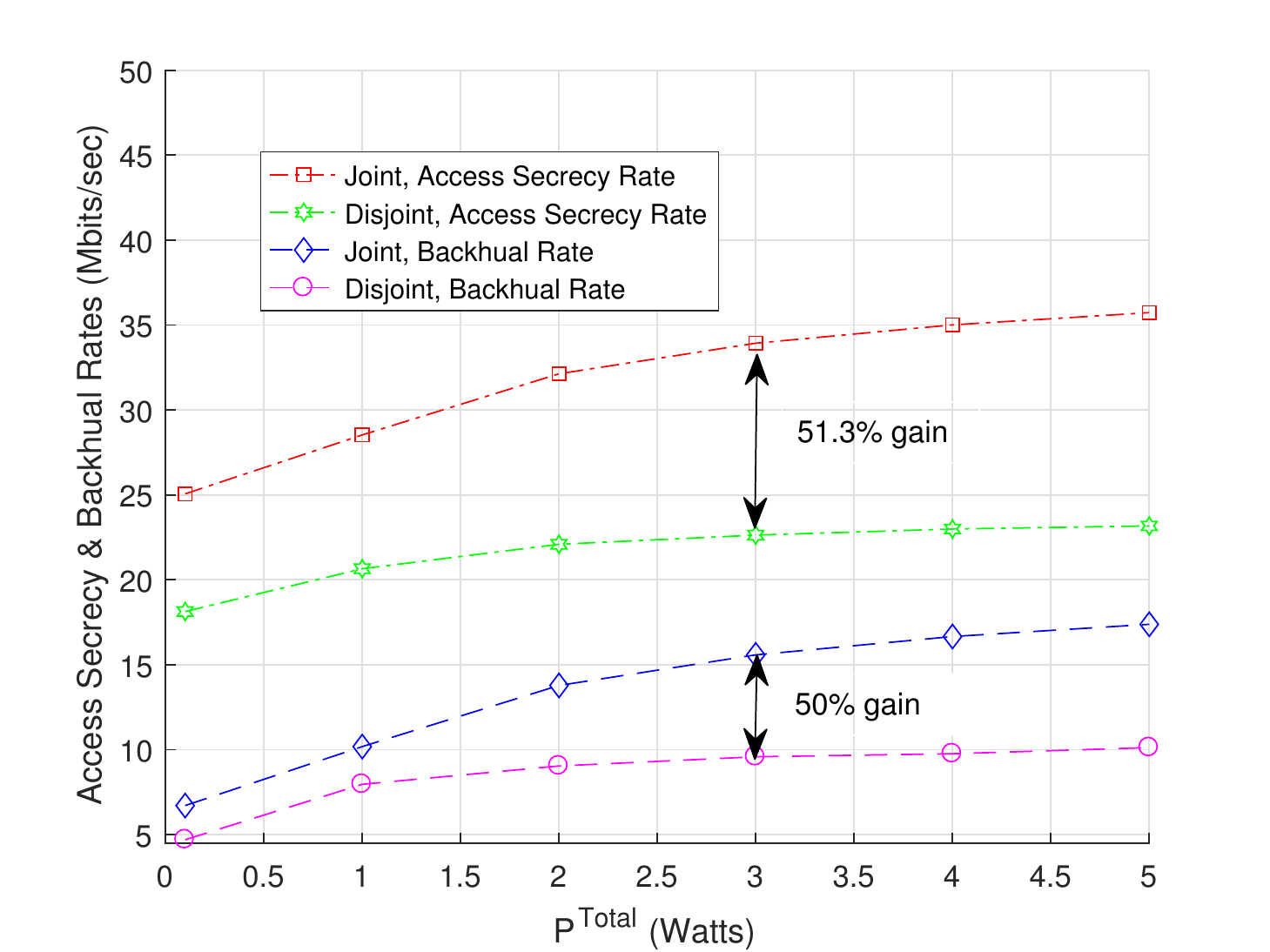}}
		\caption{(a) Access secrecy rate, $R^S$, \emph{vs}.
			the channel uncertainty, $\varepsilon_{h}$ for the SFCD scenario.
			System parameters are: $B=2, M=28, F=2, U=2, N=8, Q=1, K=6, D=2, T=0.01$ s, $\alpha_k=1$  Mbits, $\forall k$, $V_b=10$ Mbits, $\tilde{P}^{\text{Total}}=0.1$ Watts, $E^{\text{max}}_b=2$ Joule, $\forall b$, $\rho=0.5$ Joule. (b) Secrecy access and backhaul rates, $R^S$ and $\tilde{R}$, \emph{vs}.
			the backhaul transmission power, $P^{\text{Total}}$ for the SFCD scenario.
			System parameters are:$B=2, M=28, F=2, U=4, N=8, Q=1, K=6, D=2, T=0.01$
			s, $\alpha_k=1$ Mbits, $\forall k$, $V_b=10$ Mbits, $\forall b$, $E^{\text{max}}_b=5$ Joule, $\forall b$, $\rho=0.5$ Joule, $\varepsilon_{h}=0.5$.}
	\end{center}
\end{figure}

\subsection{Comparison Between Joint backhaul and access optimization and Disjoint Optimization Problem Solution}
Fig. \ref{Fig_compare_joint_disjoint} illustrates the comparison
between joint optimization and disjoint optimization problem
solutions versus different backhaul transmission power for the SFCD scenario. In solving the main problem disjointly, one must solve two optimization problems, one for the access part and the other for the backhaul part. Here, we first solve the backhaul problem since in most cases the backhaul capacity is the limiting factor. For the backhaul problem, the objective is maximization of the backhaul rate, and the constraint is the total transmit power of the backhaul. Solving this problem, the supported transmission rate of the backhaul is obtained which will be used in the access problem. The access problem is similar to the problem \eqref{eq--2}. The differences between the access problem and problem \eqref{eq--2} are that the optimization variables of the backhaul and the constraints relating to the backhaul are removed, and one additional constraint, which states the the access secrecy rate should be above the backhaul transmission rate (obtained in the backhaul problem), is included in the access problem.  It can be
noticed that the joint backhaul and access optimization approach has better solution than the disjoint backhaul and access
optimization approach. This is mainly because that in the joint scenario, the feasibility set of the optimization problem is bigger than the disjoint one. Indeed, in the access problem, we have the constraint which enforces that the access rate should above the backhaul transmission rate (obtained in the backhaul problem) which makes the feasibility set of the disjoint problems smaller than that of the joint one. From Fig. \ref{Fig_compare_joint_disjoint}, it can be observed that there
	exists nearly 51.3\% and 50\% performance gap between joint and disjoint approachs in terms of backhaul and access rates, respectively.

\subsection{Effect of super frame size}
Fig. \ref{Fig_Energy_efficiency_backhual_rate_access_rate_vs_F}
shows the variation of the EE with
the number of frames, $F$ for the SFCD scenario. It is seen that by increasing super frame size, the value of EE increases. In other words, by increasing super frame size, the transmitter can transmit data stream over different frames, then the secrecy access rate and EE will increase. For limited battery storage,  with increasing super frame size, at first the EE increases. However, with further increasing super frame size, due to energy overflow constraints, (\ref{eq-energy-3}), which enforce the transmitters to spend energy, the energy efficiency decreases. Note that, as the value of $\rho$ becomes larger, this decrease in EE happens in lower super frame sizes. For unlimited battery storage, since the overflow constrains, (\ref{eq-energy-3}), are absent, all the harvested energy is stored in the battery. In this case, increasing super frame size will increase the diversity gain, and hence, the energy efficiency increases.

\begin{figure}[h]
	\begin{center}
		\subfigure[]{\label{Fig_Energy_efficiency_backhual_rate_access_rate_vs_F}\includegraphics[width=3.in]
			{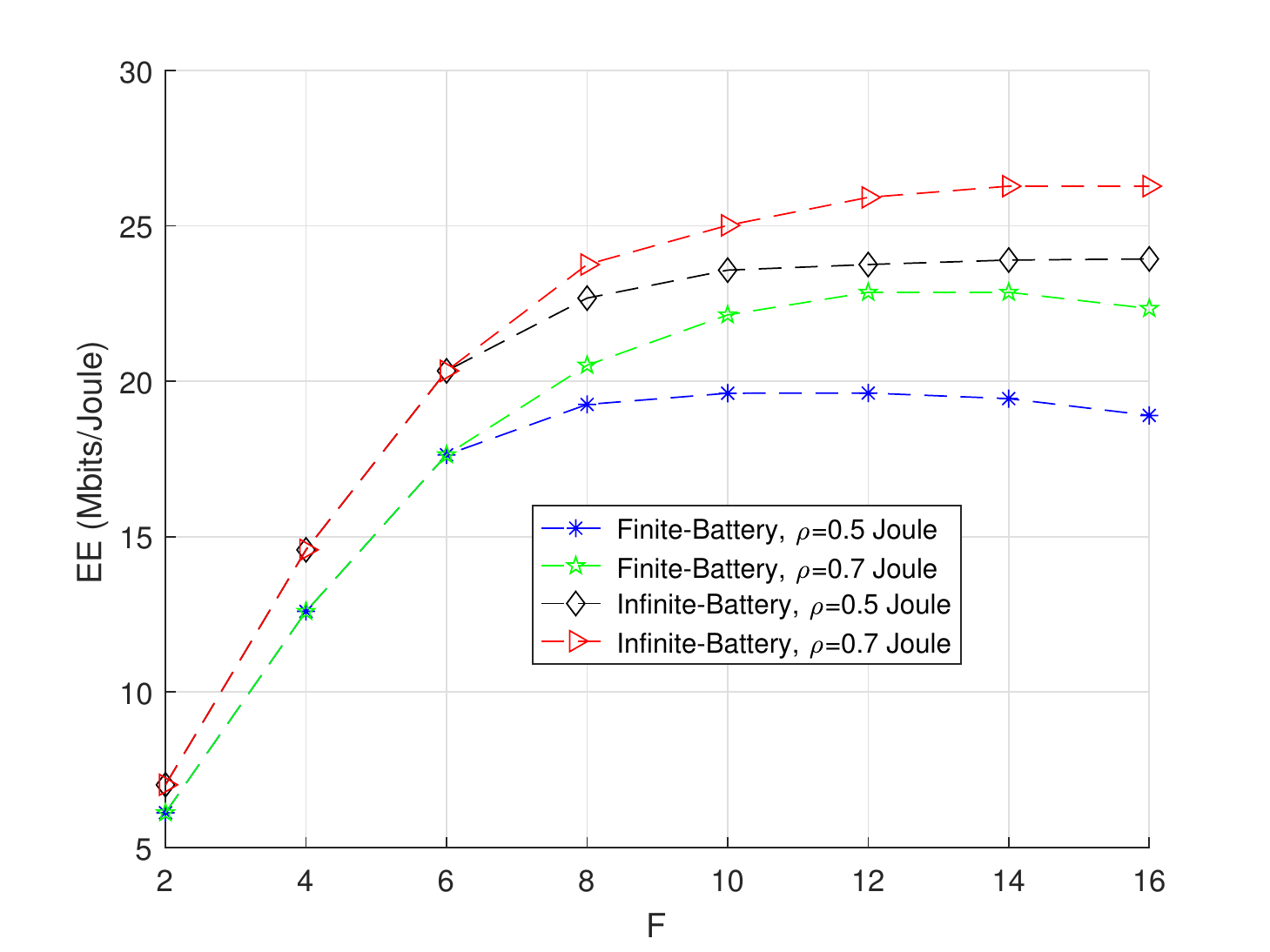}}
		\subfigure[]{\label{Fig_EE_vs_iteration_number}\includegraphics[width=3.in]
			{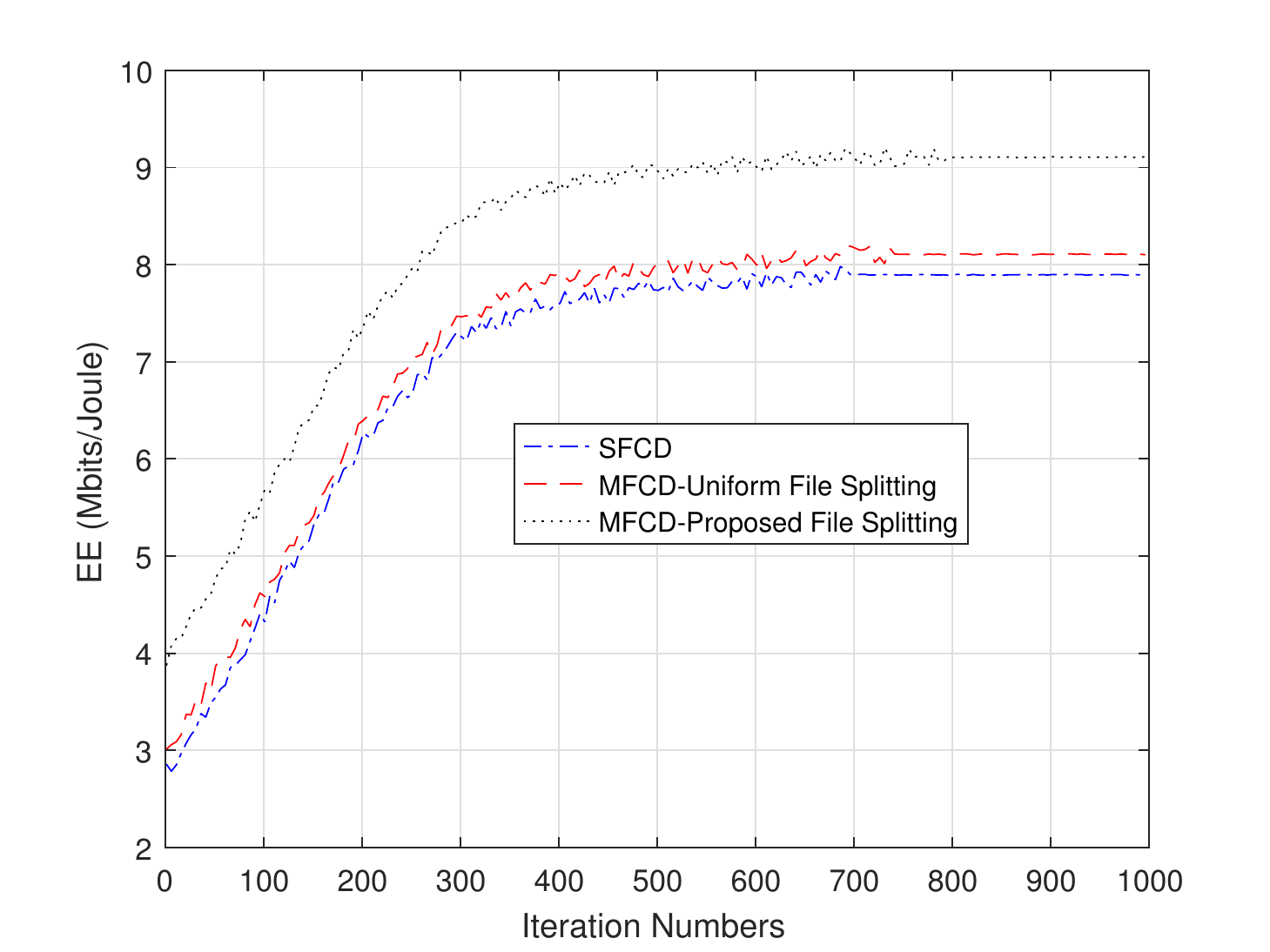}}
		\caption{(a) EE \emph{vs}. the super frame size, $F$.
			System parameters are: $B=2, M=28, U=4, N=8, Q=1, K=6, D=2, T=0.01$
			s, $\alpha_k=1$ Mbits, $\forall k$, $V_b=10$ Mbits, $\forall b$, $E^{\text{max}}_b=5$ Joule, $\forall b$, $\rho=0.5$ Joule, $\varepsilon_{h}=0.5$. (b) Energy efficiency, EE, \emph{vs}.
			the iteration number.
			System parameters are: $B=2, M=28, F=2, U=2, N=8, Q=1, K=6, D=2, T=0.01$ s, $\alpha_k=1$ Mbits, $\forall k$, $V_b=10$ Mbits, $\forall b$, $\tilde{P}^{\text{Total}}=0.1$ Watts, $\rho=0.1$ Joule, $E^{\text{max}}_b=2$ Joule, $\forall b$, $\varepsilon_{h}=0.5$.}
	\end{center}
\end{figure}

\subsection{The Convergence of the Proposed Algorithm}
In this part, we investigate the performance of the proposed resource allocation
algorithm. In Fig.
\ref{Fig_EE_vs_iteration_number}, we show EE after
each iteration at the proposed alternate optimization algorithm. As can be seen, the convergence of the proposed
algorithm can averagely be achieved within 700 iterations.

\section{Conclusion}\label{Conclusion}
In this paper, we provided a unified framework for radio resources allocation and
content placement considering the physical layer security and the channel uncertainty to provide higher energy efficiency. To do so,
we considered downlink SCMA scenarios, and we aimed at maximizing the
worst case energy efficiency subject to system constraints which determines the radio resources allocation and content placement
parameter. Moreover, we proposed two novel content delivery scenarios: 1) single frame content delivery, and 2) multiple frames content delivery. In the first scenario, the requested content by each user is served over one frame. However, in the second scenario, the requested content by each user can be delivered over several frames. Since the optimization problems are noncovex and NP-hard, we provided an iterative method converging to a local solution. Finally, we showed the
resulting secrecy access rate, backhaul rate, and energy efficiency for
different values of maximum backhaul transmit power as well as different number of users and various content size. In addition, we compared the performance of the proposed caching scheme with the existing  traditional caching schemes. Based on simulation results, via our proposed caching scheme, the performance is approximately improved by 14\% and 21\% compared to the most popular and random caching schemes, respectively. Moreover, it can be seen that the MFCD scheme can approximately enhance the system performance by 5.2\% and 11.1\% for small and large files, respectively.

\begin{appendices}
\section{Proof of Lemma 1}\label{appendix-A}
\ We jointly find the optimization variables $\mathbf{p},\tilde{\mathbf{p}},\mathbf{s},\boldsymbol{\theta}$, and $\boldsymbol{\zeta}$ such that the EE of proposed system is maximized. Hence, (\ref{eq--2}) is MINLP and non-convex. We assume that the optimization variables $\tilde{\mathbf{p}},\boldsymbol{\theta}$, and $\boldsymbol{\zeta}$ are constant. In access link, we also assume that one subcarrier is exclusively assigned to at most $D$ users within the cell. We consider the downlink of an SCMA-based access link consisting of $N$ subcarriers, $U$ users and $D=2$. By assuming that the special $u^{\text{th}}$ user's channel gain on all subcarriers is the largest among all users, the optimal power assigned to user
$u$ is equal to $p^{mt}_{bu}/N$, where $p^{mt}_{bu}$
is the transmit power assigned to user $u$ at BS $b$ at frame $t$ on codebook $m$. Then, the challenge
is how to allocate the remaining power resource $E^t_b/T-p^{mt}_{bu}$ to $U-1$ users over all subcarriers.
Thus, no subcarrier can be assigned to more than one user. Therefore, a special case
of power and codebook optimization problem with $D>1$ is equivalent to the NP-hard problem considered in \cite{lei2015joint}, and the result follows. Finally, it can be concluded that the main problem (\ref{eq--2}) is also NP-hard.
	
\section{Proof of Theorem 1}\label{appendix-B}
\ In accordance to the foregoing discussions, for (\ref{AS-1}) with
a given ${\mathbf{e}_h}_{\varrho}$,
$(\mathbf{p}_{\varrho+1},\tilde{\mathbf{p}}_{\varrho+1},\mathbf{s}_{\varrho+1},\boldsymbol{\theta}_{\varrho+1}
,\boldsymbol{\zeta}_{\varrho+1})$ is its optimal solution, while
$(\mathbf{p}_{\varrho},\tilde{\mathbf{p}}_{\varrho},\mathbf{s}_{\varrho},\boldsymbol{\theta}_{\varrho}
,\boldsymbol{\zeta}_{\varrho})$ is only its feasible solution. We
get that
\begin{align}
&\Xi_{EE}(\mathbf{p}_{\varrho+1},\tilde{\mathbf{p}}_{\varrho+1},\mathbf{s}_{\varrho+1},
\boldsymbol{\theta}_{\varrho+1}
,\boldsymbol{\zeta}_{\varrho+1},{\mathbf{e}_h}_{\varrho})\leq\\\nonumber&
\Xi_{EE}(\mathbf{p}_{\varrho},\tilde{\mathbf{p}}_{\varrho},\mathbf{s}_{\varrho},\boldsymbol{\theta}_{\varrho}
,\boldsymbol{\zeta}_{\varrho},{\mathbf{e}_h}_{\varrho}).
\end{align}
Likewise, for (\ref{AS-2}) with a given $\mathbf{p}_{\varrho}$,
$(\tilde{\mathbf{p}}_{\varrho+1},\mathbf{s}_{\varrho+1},\boldsymbol{\theta}_{\varrho+1}
,\boldsymbol{\zeta}_{\varrho+1},{\mathbf{e}_h}_{\varrho+1})$ is
its optimal solution, while
$(\tilde{\mathbf{p}}_{\varrho},\mathbf{s}_{\varrho},\boldsymbol{\theta}_{\varrho}
,\boldsymbol{\zeta}_{\varrho},{\mathbf{e}_h}_{\varrho})$ is only
its feasible solution. It follows that
\begin{align}
&\Xi_{EE}(\mathbf{p}_{\varrho},\tilde{\mathbf{p}}_{\varrho+1},\mathbf{s}_{\varrho+1},
\boldsymbol{\theta}_{\varrho+1}
,\boldsymbol{\zeta}_{\varrho+1},{\mathbf{e}_h}_{\varrho+1})\leq\\\nonumber&
\Xi_{EE}(\mathbf{p}_{\varrho},\tilde{\mathbf{p}}_{\varrho},\mathbf{s}_{\varrho},\boldsymbol{\theta}_{\varrho}
,\boldsymbol{\zeta}_{\varrho},{\mathbf{e}_h}_{\varrho}).
\end{align}
For relations (\ref{AS-3}), (\ref{AS-4}), and
(\ref{AS-6}), this trend is similar. It is naturally concluded that
\begin{align}
&\Xi_{EE}(\mathbf{p}_{\varrho+1},\tilde{\mathbf{p}}_{\varrho+1},\mathbf{s}_{\varrho+1},\boldsymbol{\theta}_{\varrho+1}
,\boldsymbol{\zeta}_{\varrho+1},{\mathbf{e}_h}_{\varrho+1})\leq\\\nonumber&
\Xi_{EE}(\mathbf{p}_{\varrho},\tilde{\mathbf{p}}_{\varrho},\mathbf{s}_{\varrho},\boldsymbol{\theta}_{\varrho}
,\boldsymbol{\zeta}_{\varrho},{\mathbf{e}_h}_{\varrho}).
\end{align}

\section{Proof of Theorem 2}\label{appendix-C}
\ Because of the convexity of $R^{\text{E},mt}_{buq1}$, it follows
that
\begin{align}\label{eq-001}
&R^{\text{E},mt}_{buq1}({\varrho+1})\geq\\\nonumber&
R^{\text{E},mt}_{buq1}({\varrho})-\left\langle\nabla
R^{\text{E},t}_{bu1}({\varrho}),
p^{mt}_{bu}({\varrho+1})-p^{mt}_{bu}(\varrho)\right\rangle,
\end{align}
for $p^{mt}_{bu}(\varrho)$ and $p^{mt}_{bu}(\varrho+1)$ in the
feasible domain. We can deduce that
\begin{align}\label{eq-002}
&-\left\langle\nabla
R^{\text{E},t}_{bu1}({\varrho}),
p^{mt}_{bu}({\varrho+1})-p^{mt}_{bu}(\varrho)\right\rangle \\\nonumber&-(R^{\text{E},mt}_{buq2}({\varrho+1})-R^{\text{E},mt}_{buq1}({\varrho}))\leq
-(R^{\text{E},mt}_{buq2}({\varrho})-R^{\text{E},mt}_{buq1}({\varrho})),
\end{align}

Combined with (\ref{eq-001}) and (\ref{eq-002}), we conclude that
\begin{equation}\label{eq-003}
-(R^{\text{E},mt}_{buq2}({\varrho+1})-R^{\text{E},mt}_{buq1}({\varrho+1}))
\leq
-(R^{\text{E},mt}_{buq2}({\varrho})-R^{\text{E},mt}_{buq1}({\varrho})).
\end{equation}

Obviously, the current value $R^{\text{E},mt}_{buq}({\varrho+1})$
is smaller than the previous value
$R^{\text{E},mt}_{buq}({\varrho})$ while the current solution
$p^{mt}_{bu}(\varrho+1)$ is better than the previous solution
$p^{mt}_{bu}(\varrho)$. As a result, the theorem is proved.
\end{appendices}

 \hyphenation{op-tical net-works semi-conduc-tor}
\bibliographystyle{IEEEtran}
\bibliography{IEEEabrv,Bibliography}
\end{document}